\documentclass[11pt]{article}
\usepackage[a4paper,margin=1in]{geometry}
\usepackage[T1]{fontenc}
\usepackage[utf8]{inputenc}
\usepackage{lmodern}
\usepackage{microtype}
\usepackage{amsmath,amssymb,amsthm,mathtools}
\usepackage{bm}
\usepackage{booktabs}
\usepackage{tabularx}
\usepackage{array}
\usepackage{enumitem}
\usepackage{graphicx}
\usepackage{xcolor}
\usepackage[numbers,sort&compress]{natbib}
\usepackage[colorlinks=true,linkcolor=blue!50!black,citecolor=blue!50!black,urlcolor=blue!50!black]{hyperref}

\newcommand{\R}{\mathbb{R}}
\newcommand{\E}{\mathbb{E}}
\newcommand{\Ivec}{\mathcal{I}}
\newcommand{\Exch}{\mathcal{E}}
\newcommand{\Pcal}{\mathcal{P}}
\newcommand{\one}{\mathbf{1}}
\newcommand{\e}{\mathbf{e}}
\newcommand{\KT}{d_{\mathrm{KT}}}

\newcommand{\barx}{\bar{x}_{ij}}
\newcommand{\deltax}{\delta_{ij}}

\DeclareMathOperator{\sign}{sign}
\DeclareMathOperator{\dist}{dist}

\newtheorem{definition}{Definition}[section]
\newtheorem{proposition}[definition]{Proposition}
\newtheorem{corollary}[definition]{Corollary}
\newtheorem{remark}[definition]{Remark}
\newtheorem{example}[definition]{Example}
\newtheorem{assumption}[definition]{Assumption}
\newtheorem{theorem}[definition]{Theorem}

\title{A Mathematical Theory of Ranking}
\author{\href{https://github.com/yinsn}{Yin Cheng}\thanks{Email: \href{mailto:yin.sjtu@gmail.com}{yin.sjtu@gmail.com}}}
\date{\today}

\begin{document}

\maketitle

\begin{abstract}
Ranking systems are typically implemented through scalar scores, yet the induced order depends only on pairwise comparisons rather than on the absolute scale of those scores. We develop a mathematical theory of ranking centered on \emph{pairwise margins}. In the linear case, each margin admits an exact factor-level decomposition; the resulting \(L_1\) local influence share is not merely a convenient normalization but the unique local budgeting rule compatible with pure factor refinement. Aggregating local shares gives a \emph{global influence structure}; in log-absolute-weight coordinates \(u_f=\log |w_f|\), this structure is the gradient of a convex potential, its Jacobian is a competition-graph Laplacian, and Influence Exchange satisfies a finite energy identity with a zero-exchange rigidity law. We extend the pairwise framework to nonlinear scoring through local linearization, path-based pairwise attribution, and Pairwise Integrated Gradients. There, factorwise path dependence is governed precisely by mixed-partial interaction curvature: factorwise pathwise uniqueness is recovered exactly in the additive/no-interaction regime. We further connect the continuous obstruction to discrete exactness, triangle curl, Hodge-like diagnostics on sampled pairwise graphs, and root-space and Weyl-chamber geometry. The formally proved core comprises canonical local budgeting, exchange geometry, and the nonlinear uniqueness boundary; the subsequent geometric perspectives serve as scoped interpretive closures rather than independently closed theorem systems.

\end{abstract}

\tableofcontents

\section{Introduction}

Ranking is typically implemented through scalar scores. In a recommender or retrieval system, the starting point is often a formula of the form
\begin{equation}
    s_i = \sum_{f=1}^d w_f x_{if},
\end{equation}
or, more generally, a nonlinear map \(s_i = F(x_i)\). While this score-first implementation is operationally convenient, the ranking itself depends only on relative order: the system ultimately determines whether item \(i\) should precede item \(j\), not whether \(s_i\) should attain one absolute value rather than another. The present paper develops this observation into a mathematical theory of ranking built on pairwise-first principles.

Our organizing concept is \emph{Influence Exchange}. Beginning with pairwise margins, the approach decomposes each local contest into factor-level contributions in the linear case, normalizes those contributions into local influence shares, aggregates them into a global influence structure, and compares that structure across model states. This yields a systematic accounting framework for ranking changes: rather than asking only how many pairwise relations changed, we ask how pairwise control was reallocated across factors.

The linear case serves as the foundational entry point because it admits exact factor-level decomposition without ambiguity. Nonlinear scoring is subsequently treated as a continuous extension of the same pairwise accounting problem, rather than as a separate explainability topic. Throughout, the theory maintains a clear separation between what is proved exactly, what follows from standard structural results, and what serves as geometric interpretation.

The mathematical development proceeds from exact linear decomposition, through canonical local budgeting, to rigid global exchange geometry, and finally to a sharp nonlinear boundary for factorwise uniqueness.

\begin{figure}[t]
    \centering
    \includegraphics[width=0.92\linewidth]{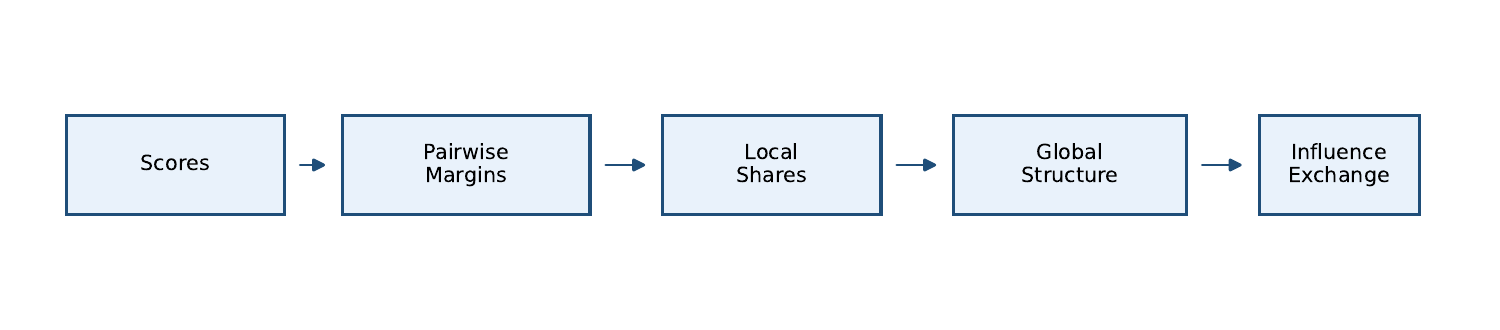}
    \caption{The conceptual pipeline of the paper. The analytical target moves from absolute scores to pairwise margins, local influence shares, global influence structure, and finally Influence Exchange across model states.}
    \label{fig:pipeline}
\end{figure}

\paragraph{Why pairwise margins.}
Pairwise ranking is well-established in learning to rank; pairwise losses and pairwise preference models are classical objects in the literature \citep{burges2005ranknet,burges2010ranknet,bradley1952rank,thurstone1927law}. Our claim is both narrower and distinct. We argue that \emph{pairwise margin} is the appropriate atomic object for influence accounting, and that centering the analysis on this object clarifies how local factor-level contributions, global allocation, and model-to-model change cohere within a single chain.

\paragraph{Linear exactness, nonlinear semantics.}
In the linear case, every pairwise margin decomposes exactly into factor-level local terms. This exactness makes it possible to define local influence shares and aggregate them into a global influence structure. Section~4 establishes that the local share is the unique refinement-consistent budgeting rule and that the resulting global structure possesses a convex-gradient, Laplacian-response geometry. In the nonlinear case, the pairwise margin remains well-defined; in general factorwise uniqueness fails, and Theorem~\ref{thm:interaction-curvature} identifies the exact no-interaction regime in which it is recovered. We therefore extend the framework through local linearization and path-based pairwise attribution, drawing on path methods such as Integrated Gradients and Aumann--Shapley-type value constructions \citep{sundararajan2017axiomatic,aumann1974values,lundberg2017unified}. The resulting framework maintains a clear distinction between what is formally determined and what constitutes a semantic choice.

\paragraph{Contributions.}
This paper makes four contributions.
\begin{enumerate}[leftmargin=2em]
    \item We recast ranking as a pairwise-order problem and identify pairwise margin as the atomic analytical object underlying flips, transpositions, and score-space boundary crossings.
    \item We prove a refinement-consistency characterization of the local influence share: once signed linear terms have been converted into nonnegative local efforts, the \(L_1\) share is the unique refinement-consistent local budget.
    \item We prove an exchange-geometry theorem: under log-absolute linear weights, the global influence structure is a convex-potential gradient, its response operator is a competition-graph Laplacian, and zero exchange is rigid up to componentwise common rescaling.
    \item We extend the pairwise accounting view to nonlinear scoring and prove an interaction-curvature theorem: factorwise path attribution is path-independent exactly when the relevant mixed partials vanish, with full factorwise uniqueness equivalent to additive separability. Section~8 then reinterprets the same obstruction in discrete curl and exactness language.
\end{enumerate}

\paragraph{Scope and boundaries.}
We do not claim to solve ranking explainability in general, nor do we claim a unique nonlinear attribution semantics. This manuscript is mathematical rather than a duplicate experimental report; however, the influence-accounting vocabulary is connected to production engineering practice. A companion Sortify technical report describes a fully autonomous ranking-optimization agent based on Influence Share/Exchange, reports deployment in a large-scale recommendation system, and documents positive online A/B outcomes followed by production rollout \citep{cheng2026letagentsteer}. The geometric material on exactness, Hodge-like decompositions, and Weyl chambers serves as a structural language for interpreting the objects developed herein, rather than as a claim that a fully new theorem system has been closed within this manuscript.

\paragraph{Roadmap.}
Sections~2--3 build the exact pairwise-margin core for scores and linear models. Section~4 contains the canonical local-share theorem and the exchange-geometry theorem for global influence structure and Influence Exchange. Sections~5--6 lift the same objects to permutation space and score-space geometry. Section~7 extends the framework to nonlinear scoring and proves the interaction-curvature characterization of factorwise path dependence. Section~8 reinterprets this nonlinear obstruction in discrete exactness, cycle-consistency, and triangle-curl language; Section~9 closes the geometric arc in root-space terms. Related work is deferred to Section~10, followed by limitations and conclusion.

\section{Preliminaries: Scores, Orders, and Pairwise Margins}

This section establishes the basic objects of the paper. The motivating observation is elementary: a score vector is merely a representation of a ranking, whereas pairwise margins are the atomic coordinates that determine order.

\begin{assumption}[No ties in the main text]
Unless stated otherwise, we assume \(s_i \neq s_j\) for all distinct items \(i \neq j\). Ties and degenerate pairs are discussed in Appendix~C and Section~11.
\end{assumption}

\begin{definition}[Score-induced ranking]
Given \(n\) items with score vector \(s = (s_1,\dots,s_n) \in \R^n\), the induced ranking is the total order
\[
    i \succ_s j \quad \Longleftrightarrow \quad s_i > s_j.
\]
\end{definition}

\subsection{Monotone transforms preserve ranking order}

\begin{proposition}[Strictly monotone invariance]
Let \(g:\R \to \R\) be strictly increasing. Then the ranking induced by \(s\) is identical to the ranking induced by \(g(s) = (g(s_1),\dots,g(s_n))\).
\end{proposition}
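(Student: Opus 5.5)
The plan is to reduce the claim to a pairwise statement. By the definition of score-induced ranking, two score vectors induce the same total order exactly when, for every ordered pair of distinct items \(i \neq j\), we have \(s_i > s_j\) if and only if \(g(s_i) > g(s_j)\). So I will fix an arbitrary pair \((i,j)\) with \(i \neq j\) and prove this biconditional. The order relation is determined by these pairwise comparisons, so nothing global needs to be checked.

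For the forward direction, suppose \(s_i > s_j\). Strict monotonicity of \(g\) gives \(g(s_i) > g(s_j)\) directly, so \(i \succ_{g(s)} j\).

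For the reverse direction, suppose \(g(s_i) > g(s_j)\) and argue by contradiction. If \(s_i \le s_j\), there are two cases.
\begin{itemize}
\item If \(s_i = s_j\), then \(g(s_i) = g(s_j)\), contradicting the strict inequality.
\item If \(s_i < s_j\), then strict monotonicity gives \(g(s_i) < g(s_j)\), again a contradiction.
\end{itemize}
Hence \(s_i > s_j\).

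One consequence deserves a remark. Since \(g\) is injective, the standing no-ties assumption is preserved: \(s_i \neq s_j\) implies \(g(s_i) \neq g(s_j)\). Therefore \(g(s)\) also induces a total order, and \(\succ_{g(s)}\) is a genuine ranking in the sense of the definition rather than a weak order.

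No step is a real obstacle. The only point needing care is the reverse implication, which relies on strictness and injectivity of \(g\) rather than on mere monotonicity. I will phrase the conclusion in the pairwise-margin language of this section: \(\sign(s_i - s_j) = \sign(g(s_i) - g(s_j))\) for all \(i \neq j\). This is the form in which the result will be reused later.
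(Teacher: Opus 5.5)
Your proof is correct and follows the paper's argument: reduce to pairwise comparisons and use strict monotonicity to get \(s_i > s_j \iff g(s_i) > g(s_j)\) for every pair, with the reverse implication spelled out in more detail than the paper gives. One small correction to your commentary: the reverse implication needs only that \(g\) is nondecreasing, and its \(s_i = s_j\) case uses only that \(g\) is a function, not injectivity; strictness is essential in the forward implication and in preserving the no-ties condition.
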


\begin{proof}
For any pair \(i,j\), strict monotonicity implies
\[
    s_i > s_j \quad \Longleftrightarrow \quad g(s_i) > g(s_j).
\]
Hence every pairwise comparison is preserved, and therefore the total order is preserved.
\end{proof}

Although elementary, Proposition~2.1 clarifies the paper's perspective. Absolute score values carry no intrinsic ranking meaning; they matter only insofar as they encode pairwise order. This motivates treating pairwise quantities as the primary analytical objects throughout.

\subsection{Pairwise margins and atomic flips}

\begin{definition}[Pairwise margin]
For distinct items \(i\) and \(j\), the pairwise margin is
\begin{equation}
    \Delta_{ij} = s_i - s_j.
\end{equation}
\end{definition}

\begin{proposition}[Margin sign determines pairwise order]
Under the no-ties assumption,
\[
    i \succ_s j \quad \Longleftrightarrow \quad \Delta_{ij} > 0.
\]
Equivalently, \(j \succ_s i\) if and only if \(\Delta_{ij} < 0\).
\end{proposition}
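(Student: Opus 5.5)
The plan is to unfold the definitions and use the trichotomy of the real order. First, by the definition of score-induced ranking, \(i \succ_s j\) holds exactly when \(s_i > s_j\). Subtracting \(s_j\) from both sides of a strict inequality preserves it, so \(s_i > s_j\) is equivalent to \(s_i - s_j > 0\). By the definition of the pairwise margin, this says \(\Delta_{ij} > 0\), which gives the first equivalence.

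For the second statement, I would apply the first equivalence with the roles of \(i\) and \(j\) swapped. This gives \(j \succ_s i \Longleftrightarrow \Delta_{ji} > 0\). Since \(\Delta_{ji} = s_j - s_i = -\Delta_{ij}\), the condition \(\Delta_{ji} > 0\) is the same as \(\Delta_{ij} < 0\).

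The only point needing care is where the no-ties assumption enters. Under it, \(\Delta_{ij} \neq 0\) for \(i \neq j\), so trichotomy leaves exactly the two cases \(\Delta_{ij} > 0\) and \(\Delta_{ij} < 0\). Consequently exactly one of \(i \succ_s j\) and \(j \succ_s i\) holds, which is what makes the word ``equivalently'' in the statement accurate: the second claim is the negation of the first. Without the assumption, a tie \(\Delta_{ij}=0\) would make both relations fail. I expect no genuine obstacle; the argument is a direct unfolding of definitions.
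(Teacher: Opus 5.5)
Your proof is correct and takes the same approach as the paper, which simply says the claim is immediate from the definition \(\Delta_{ij}=s_i-s_j\). Your swap argument for the second equivalence is accurate. So is your observation that the no-ties assumption is needed only to make the two statements negations of each other, and hence to justify ``equivalently''.
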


\begin{proof}
The claim is immediate from the definition \(\Delta_{ij} = s_i - s_j\).
\end{proof}

\begin{remark}[Pairwise order determines the total order]
Under no ties, the collection of signs \(\sign(\Delta_{ij})\) for all unordered pairs \(\{i,j\}\) determines the induced total order. In that sense, the ranking can be reconstructed from pairwise margins even though it is represented by a score vector.
\end{remark}

\begin{remark}[Atomic flips as sign crossings]
Consider a continuous perturbation path \(t \mapsto s(t)\) in score space. A local flip between items \(i\) and \(j\) occurs exactly when the pairwise margin
\[
    \Delta_{ij}(t) = s_i(t) - s_j(t)
\]
crosses zero. Thus, at the most local level, a ranking change is a sign crossing of a pairwise margin.
\end{remark}

This observation serves as the entry point for the remainder of the manuscript. Once ranking changes are analyzed at the level of pairwise margins, linear factor decompositions, local shares, and global influence accounting arise naturally rather than as ad hoc constructions. Section~3 develops the exact decomposition for the linear case.

\section{Linear Margin Decomposition and Pairwise Attribution}

The linear case constitutes the formal foundation of the paper. It is introduced not because real systems are typically linear, but because every object of interest can be defined exactly and transparently in this setting.

\subsection{Exact factor-level decomposition}

Assume a linear scoring model
\begin{equation}
    s_i = w^\top x_i = \sum_{f=1}^d w_f x_{if}.
\end{equation}

\begin{definition}[Linear factor-level local contribution]
For a pair \((i,j)\) and factor \(f\), define the local pairwise contribution
\begin{equation}
    \Delta_{ij}^{(f)} = w_f(x_{if} - x_{jf}).
\end{equation}
\end{definition}

\begin{proposition}[Linear margin decomposition]
Under the linear model,
\begin{equation}
    \Delta_{ij} = \sum_{f=1}^d \Delta_{ij}^{(f)}.
\end{equation}
\end{proposition}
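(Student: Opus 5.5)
The plan is a direct computation from the definitions, since both sides are explicit finite sums. Start from the definition of the pairwise margin, \(\Delta_{ij} = s_i - s_j\). Then substitute the linear scoring model \(s_i = \sum_{f=1}^d w_f x_{if}\) for both \(s_i\) and \(s_j\), so that
\[
    \Delta_{ij} = \sum_{f=1}^d w_f x_{if} - \sum_{f=1}^d w_f x_{jf}.
\]

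Next, combine the two finite sums termwise. Linearity of finite summation allows this with no convergence issue, since \(d\) is finite. Factoring out \(w_f\) in each summand gives
\[
    \Delta_{ij} = \sum_{f=1}^d w_f\,(x_{if} - x_{jf}).
\]
Finally, recognize each summand as the linear factor-level local contribution \(\Delta_{ij}^{(f)}\) from the preceding definition, which yields the claimed identity.

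There is no genuine obstacle here. The only point worth making explicit is that the identity holds exactly and for every pair \((i,j)\), without invoking the no-ties assumption. That assumption matters only for interpreting \(\sign(\Delta_{ij})\) via the earlier proposition that the margin sign determines pairwise order, not for the decomposition itself.

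I would also remark that the decomposition is the pairwise difference of the additive score representation. This is the property later exploited when the \(\Delta_{ij}^{(f)}\) are normalized into local shares.
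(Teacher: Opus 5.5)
Your proof is correct and matches the paper's own argument: substitute the linear score into \(\Delta_{ij}=s_i-s_j\), combine the two finite sums, factor out \(w_f\), and recognize each summand as \(\Delta_{ij}^{(f)}\). Your side remark is also right: the identity holds for every pair and does not need the no-ties assumption.
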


\begin{proof}
\[
    \Delta_{ij}
    = s_i - s_j
    = \sum_{f=1}^d w_f x_{if} - \sum_{f=1}^d w_f x_{jf}
    = \sum_{f=1}^d w_f(x_{if} - x_{jf})
    = \sum_{f=1}^d \Delta_{ij}^{(f)}.
\]
\end{proof}

The proposition is algebraically immediate yet conceptually central. The pairwise margin is not an opaque aggregate of the scoring model; in the linear case, it decomposes as an exact sum of factor-level local terms, each supporting one side of the pairwise contest.

\begin{example}[Two-factor local contest]
Consider two factors with local contributions \(\Delta_{ij}^{(1)} = 3\) and \(\Delta_{ij}^{(2)} = -2\). Then the total margin is \(\Delta_{ij} = 1\), so item \(i\) outranks item \(j\). Factor \(1\) supports item \(i\), factor \(2\) supports item \(j\), and the net order is the algebraic result of their opposition.
\end{example}

\subsection{From score attribution to pairwise attribution}

When ranking is analyzed through pairwise order, the natural explanatory question shifts. Rather than asking only why a given item has a high absolute score, we ask which factors drove the margin in a specific pairwise contest and which factors would be responsible if that contest were to flip.

This shift suggests four design criteria for a local pairwise attribution rule.
\begin{enumerate}[leftmargin=2em]
    \item \textbf{Locality.} The attribution for a pair \((i,j)\) should depend on that pair rather than on unrelated items.
    \item \textbf{Nonnegativity at the share level.} If the goal is to allocate local influence budget, shares should not cancel by sign.
    \item \textbf{Directional separation.} A factor that supports item \(i\) and a factor that supports item \(j\) may oppose each other, yet both may exert nontrivial local influence.
    \item \textbf{Budget normalization.} The local accounting should admit a normalized allocation that sums to one on informative pairs.
\end{enumerate}

These criteria motivate converting signed linear terms into nonnegative local effort. The following section introduces a refinement-consistency requirement and demonstrates that, for local budgeting of such effort, the influence-share normalization is uniquely determined.

\section{Influence Share, Influence Structure, and Influence Exchange}

This section turns the exact linear decomposition into an influence-accounting framework. The development proceeds from signed local contributions to a canonical nonnegative local budget, from local shares to a global influence structure, and finally from global conservation to the geometry and rigidity of Influence Exchange.

\subsection{Local influence share}

\begin{definition}[Informative-pair effort]
For a pair \((i,j)\) in the linear model, define the total local effort
\begin{equation}
    Z_{ij} = \sum_{f=1}^d \left|\Delta_{ij}^{(f)}\right|.
\end{equation}
\end{definition}

\begin{definition}[Influence share]
On an informative pair with \(Z_{ij} > 0\), define the local influence share of factor \(f\) by
\begin{equation}
    \rho_{ij}^{(f)} = \frac{|\Delta_{ij}^{(f)}|}{Z_{ij}}.
\end{equation}
\end{definition}

The use of absolute values is intentional: the objective is to quantify how much local pairwise control a factor exerts, not merely the sign of its preferred direction. This definition yields a nonnegative local budget allocation, but leaves open a structural question: is this normalization merely convenient, or is it forced by a minimal invariance principle? The following theorem shows that once local budgeting and stability under pure factor refinement are imposed, the normalization is uniquely determined.

\begin{theorem}[Refinement-consistency characterization of local influence share]
\label{thm:refinement-local-share}
For each dimension \(d\), let
\[
    \Psi^{(d)}:\R_{\ge 0}^d\setminus\{0\}\to \Delta^{d-1}
\]
be a local share rule for nonnegative effort vectors. Assume the family \(\{\Psi^{(d)}\}_{d\ge 1}\) satisfies budgeting and refinement consistency: whenever the \(k\)-th effort \(a_k\) is split into \(r,q\ge 0\) with \(r+q=a_k\), the refined vector
\[
    a^{[k\to(r,q)]}
    =
    (a_1,\dots,a_{k-1},r,q,a_{k+1},\dots,a_d)
\]
satisfies, for \(a_k>0\),
\[
    \Psi^{(d+1)}\!\left(a^{[k\to(r,q)]}\right)
    =
    \left(
    \Psi_1^{(d)}(a),\dots,\Psi_{k-1}^{(d)}(a),
    \frac{r}{a_k}\Psi_k^{(d)}(a),
    \frac{q}{a_k}\Psi_k^{(d)}(a),
    \Psi_{k+1}^{(d)}(a),\dots,\Psi_d^{(d)}(a)
    \right),
\]
and if \(a_k=0\), the same identity is interpreted as leaving all non-refined coordinates unchanged and assigning share \(0\) to the two refined coordinates. Then the only possible rule is
\begin{equation}
    \Psi_f^{(d)}(a)=\frac{a_f}{\sum_{g=1}^d a_g}.
\end{equation}
\end{theorem}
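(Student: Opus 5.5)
The plan is to show that refinement consistency lets us build every nonnegative effort vector from a single lump, so all shares are forced by the one-dimensional rule. The one-dimensional rule is itself trivial, because the target simplex \(\Delta^0\) is the single point \(\{1\}\). Hence \(\Psi^{(1)}(S)=1\) for every \(S>0\), with no choice involved.

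Fix \(d\ge 2\) and \(a\in\R_{\ge0}^d\setminus\{0\}\), and write \(S=\sum_g a_g>0\). Define the partial tails \(T_k=\sum_{g\ge k}a_g\), so that \(T_1=S\). We construct \(a\) from \((S)\) by a chain of \(d-1\) pure refinements. At step \(k\) (for \(k=1,\dots,d-1\)) we take the current vector \((a_1,\dots,a_{k-1},T_k)\) and split its last coordinate \(T_k\) into \((r,q)=(a_k,T_{k+1})\). This is a legal split, since \(r,q\ge0\) and \(r+q=T_k\). After the final step the current vector is exactly \(a\).

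We then prove by induction on \(k\) that
\[
\Psi^{(k)}(a_1,\dots,a_{k-1},T_k)=\left(\frac{a_1}{S},\dots,\frac{a_{k-1}}{S},\frac{T_k}{S}\right).
\]
The base case \(k=1\) is the trivial one-dimensional rule, since \(T_1/S=1\). For the inductive step there are two cases, according to whether \(T_k\) is positive.
\begin{itemize}
\item If \(T_k>0\), the refinement identity leaves the first \(k-1\) coordinates unchanged. It replaces the last share \(T_k/S\) by \(\frac{a_k}{T_k}\cdot\frac{T_k}{S}=a_k/S\) and \(\frac{T_{k+1}}{T_k}\cdot\frac{T_k}{S}=T_{k+1}/S\).
\item If \(T_k=0\), the stated degenerate convention keeps the other coordinates and assigns share \(0\) to both refined coordinates. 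Here \(a_k=T_{k+1}=0\), so these zeros again equal \(a_k/S\) and \(T_{k+1}/S\).
\end{itemize}
Taking \(k=d\) yields \(\Psi^{(d)}_f(a)=a_f/S\) for every \(f\), which is the claimed formula.

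I would then add a short check that the \(L_1\) rule genuinely satisfies the hypotheses, so the characterization is not vacuous. It lands in the simplex. For \(a_k>0\), the refined shares are \(r/S=\frac{r}{a_k}\cdot\frac{a_k}{S}\) and \(q/S\) likewise, matching the identity. For \(a_k=0\), the refined shares are \(0/S=0\), matching the convention.

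The only delicate point I expect is the degenerate case, where refinement starts from a zero coordinate. The proportional formula in the theorem is undefined there, and the proof must lean on the stated convention. The construction above is arranged so that the split coordinate \(T_k\) is zero only when both of its pieces are zero, which is exactly what makes the convention consistent with \(a_f/S\).
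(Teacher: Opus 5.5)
Your proposal is correct and follows the paper's proof: both start from the forced rule $\Psi^{(1)}(S)=(1)$, peel off $a_1,a_2,\dots$ from the remaining tail by successive refinements of the last coordinate, and then check that the $L_1$ formula satisfies the refinement identity. Your induction on the tails $T_k$, including the observation that $T_k=0$ forces both refined pieces to vanish, makes precise the paper's brief remark that zero-effort pieces receive zero share.
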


\begin{proof}
Let \(S=\sum_{g=1}^d a_g>0\). For the one-dimensional effort vector \((S)\), budgeting forces \(\Psi^{(1)}(S)=(1)\). Now refine this single coordinate repeatedly:
\[
    (S)\to (a_1,S-a_1)\to (a_1,a_2,S-a_1-a_2)\to \cdots \to (a_1,\dots,a_d).
\]
At each nonzero split, refinement consistency forces the parent's share to be divided in the same ratio as its effort. Zero-effort children receive zero share by assumption. Therefore the share assigned to the final coordinate \(f\) is \(a_f/S\). This proves uniqueness. Conversely, the formula \(a_f/\sum_g a_g\) is nonnegative, sums to one, and obeys the displayed refinement identity directly, so it exists.
\end{proof}

\begin{corollary}[Canonical linear local share]
\label{cor:canonical-linear-share}
In the linear core of this paper, applying Theorem~\ref{thm:refinement-local-share} to the effort vector \(a_f=|\Delta_{ij}^{(f)}|\) yields Definition~4.2. Thus \(\rho_{ij}^{(f)}\) is the unique refinement-consistent local budgeting rule for linear pairwise effort.
\end{corollary}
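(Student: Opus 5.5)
The corollary is a direct specialization of Theorem~\ref{thm:refinement-local-share}, so the plan is to check that the linear pairwise setting produces a legitimate input to that theorem and then read off the conclusion.

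First, fix an informative pair \((i,j)\), i.e.\ one with \(Z_{ij}>0\) as in Definition~4.1. Define the effort vector \(a\in\R^d\) by \(a_f=|\Delta_{ij}^{(f)}|=|w_f|\,|x_{if}-x_{jf}|\). Each coordinate is nonnegative, and \(\sum_f a_f=Z_{ij}>0\). Hence \(a\in\R_{\ge0}^d\setminus\{0\}\), which is exactly the domain of \(\Psi^{(d)}\). This is the one place where the informativeness hypothesis enters. Non-informative pairs are excluded from Definition~4.2 anyway, so no share needs to be assigned there.

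Second, suppose any family \(\{\Psi^{(d)}\}\) satisfies budgeting and refinement consistency. Applying Theorem~\ref{thm:refinement-local-share} to this \(a\) gives
\[
\Psi_f^{(d)}(a)=\frac{a_f}{\sum_g a_g}=\frac{|\Delta_{ij}^{(f)}|}{Z_{ij}}=\rho_{ij}^{(f)},
\]
which coincides with Definition~4.2. For existence, I would invoke the converse half of the theorem: the \(L_1\) normalization itself satisfies the refinement identity. So \(\rho_{ij}\) is an admissible rule, and by the first half it is the only one.

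The only step requiring care is interpretive. The corollary asserts uniqueness among rules that act on the \emph{effort vector}. That means the signed-to-nonnegative conversion \(\Delta_{ij}^{(f)}\mapsto|\Delta_{ij}^{(f)}|\) is taken as given, as the paper's phrasing "once signed linear terms have been converted into nonnegative local efforts" indicates. I would state explicitly that the corollary says nothing about alternative effort maps, such as squared contributions. It says only that, given absolute-value effort, the budgeting rule is forced.

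A further point is that refinement of a factor must correspond to an effort split \(r+q=a_f\) within the same pair. In the linear model this is natural: splitting \(w_f x_{\cdot f}\) into two same-signed subfactors adds their absolute contributions. Other splits can produce opposite-signed pieces whose absolute contributions do not add to \(a_f\). Those are outside the theorem's scope, and I would remark on this rather than prove anything about them. Beyond this scoping remark, there is no real obstacle; the proof is a substitution.
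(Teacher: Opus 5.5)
Your proposal is correct and takes the same approach as the paper: the corollary is a direct substitution of the effort vector \(a_f=|\Delta_{ij}^{(f)}|\) into Theorem~\ref{thm:refinement-local-share}, with \(Z_{ij}>0\) placing \(a\) in \(\R_{\ge 0}^d\setminus\{0\}\) and the theorem's converse half supplying existence. Your scoping remarks are also consistent with how the paper frames the result: uniqueness holds only among rules acting on already-converted nonnegative efforts, and opposite-signed sub-splits fall outside the refinement axiom.
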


\begin{remark}[Bookkeeping refinements]
If a factor is split only as a bookkeeping refinement of its nonnegative effort, the refined shares inside that block sum to the original share. Consequently, block-level global influence and block-level exchange are unchanged by such a pure refinement.
\end{remark}

\begin{proposition}[Simplex property of local shares]
For every informative pair \((i,j)\), the vector \((\rho_{ij}^{(1)},\dots,\rho_{ij}^{(d)})\) is nonnegative and sums to one.
\end{proposition}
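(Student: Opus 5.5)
The argument is a direct verification from Definitions~4.1 and~4.2; alternatively it follows as a special case of Theorem~\ref{thm:refinement-local-share} and Corollary~\ref{cor:canonical-linear-share}, since \(\Psi^{(d)}\) takes values in the simplex \(\Delta^{d-1}\). I will give the direct argument so the proposition does not depend on the characterization theorem.

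First, fix an informative pair \((i,j)\), so that \(Z_{ij}>0\) and every \(\rho_{ij}^{(f)}\) is well defined. Nonnegativity is immediate. Each numerator \(|\Delta_{ij}^{(f)}|\) is an absolute value, hence nonnegative, and the denominator \(Z_{ij}\) is strictly positive. Therefore \(\rho_{ij}^{(f)}\ge 0\) for every \(f\).

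Second, the components sum to one. Because \(Z_{ij}\) does not depend on \(f\), it factors out of the sum:
\[
\sum_{f=1}^d \rho_{ij}^{(f)} = \frac{1}{Z_{ij}}\sum_{f=1}^d |\Delta_{ij}^{(f)}| = \frac{Z_{ij}}{Z_{ij}} = 1,
\]
where the middle equality is just the definition of \(Z_{ij}\).

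There is no genuine obstacle here. The only point needing care is the hypothesis \(Z_{ij}>0\), which is what makes the quotient well defined. The proposition is stated only for informative pairs, so pairs with \(Z_{ij}=0\) (equivalently, \(\Delta_{ij}^{(f)}=0\) for all \(f\)) are excluded and need no treatment.
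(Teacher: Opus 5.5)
Your proof is correct and is essentially the paper's own argument: nonnegativity comes from the absolute values over the positive normalizer \(Z_{ij}\), and the sum identity comes from pulling \(1/Z_{ij}\) out of the sum. You were right not to rely on Theorem~\ref{thm:refinement-local-share}, since that theorem's existence direction itself rests on exactly this verification.
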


\begin{proof}
Nonnegativity is immediate from the absolute value. Also,
\[
    \sum_{f=1}^d \rho_{ij}^{(f)}
    = \sum_{f=1}^d \frac{|\Delta_{ij}^{(f)}|}{Z_{ij}}
    = \frac{\sum_{f=1}^d |\Delta_{ij}^{(f)}|}{Z_{ij}}
    = 1.
\]
\end{proof}

\subsection{Global influence structure}

\begin{definition}[Pair distribution]
Let \(\Pcal\) be a probability distribution on ordered or unordered item pairs. The choice of \(\Pcal\) is external to the framework and can encode, for example, exposure weighting, top-of-list emphasis, or a uniform diagnostic view.
Unless an explicit convention is supplied for degenerate pairs, the global quantities below are evaluated with \(\Pcal\) supported on informative pairs for which \(Z_{ij}>0\).
\end{definition}

\begin{definition}[Global influence structure]
The global influence structure is the vector
\begin{equation}
    I_f = \E_{(i,j)\sim\Pcal}\left[\rho_{ij}^{(f)}\right], \qquad
    \Ivec(\theta) = \left(I_f(\theta)\right)_{f=1}^d,
\end{equation}
where \(\theta\) denotes the model state.
\end{definition}

\begin{proposition}[Global conservation]
The global influence structure satisfies
\begin{equation}
    \sum_{f=1}^d I_f = 1.
\end{equation}
\end{proposition}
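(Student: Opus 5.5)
The plan is to reduce global conservation to the simplex property of local shares (Proposition~4.5) and then use linearity of expectation. By the convention in the definition of the pair distribution, \(\Pcal\) is supported on informative pairs, those with \(Z_{ij}>0\). So for every pair in the support, each \(\rho_{ij}^{(f)}\) is well defined, and the pointwise identity \(\sum_{f=1}^d \rho_{ij}^{(f)} = 1\) holds.

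The first step is to check that each \(I_f\) is a well-defined finite expectation. Every \(\rho_{ij}^{(f)}\) lies in \([0,1]\), so it is a bounded nonnegative function on the support of \(\Pcal\). It is therefore integrable, and any measurability issues are trivial: the pair space is countable, or finite in the typical diagnostic setting.

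Next, since \(d\) is finite, we exchange the finite sum with the expectation:
\[
    \sum_{f=1}^d I_f
    = \sum_{f=1}^d \E_{(i,j)\sim\Pcal}\bigl[\rho_{ij}^{(f)}\bigr]
    = \E_{(i,j)\sim\Pcal}\Bigl[\sum_{f=1}^d \rho_{ij}^{(f)}\Bigr]
    = \E_{(i,j)\sim\Pcal}[1]
    = 1,
\]
where the last two equalities use Proposition~4.5 pointwise on the support and the fact that \(\Pcal\) is a probability measure.

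The only point needing care is the degenerate-pair convention. If \(\Pcal\) charged pairs with \(Z_{ij}=0\), the shares there would be undefined, and conservation would depend on whatever convention was supplied. Under an explicit convention assigning a simplex vector to such pairs, for example the uniform vector \(1/d\), the same argument goes through verbatim. Under a zero-share convention, the sum would instead equal \(\Pcal(Z_{ij}>0)\). I would therefore state the proof under the default support assumption and mention this caveat in one sentence. Everything else is routine.
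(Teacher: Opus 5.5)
Your proof is correct and follows the paper's argument: linearity of expectation swaps the finite sum with \(\E\), the simplex property gives \(\sum_f \rho_{ij}^{(f)}=1\) pointwise on the informative-pair support of \(\Pcal\), and \(\E[1]=1\). Your integrability remark and your caveat about degenerate pairs are accurate, including that a zero-share convention would give \(\Pcal(Z_{ij}>0)\), but neither is needed under the paper's default assumption that \(\Pcal\) is supported on informative pairs.
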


\begin{proof}
By linearity of expectation and the simplex property,
\[
    \sum_{f=1}^d I_f
    = \sum_{f=1}^d \E_{(i,j)\sim\Pcal}\left[\rho_{ij}^{(f)}\right]
    = \E_{(i,j)\sim\Pcal}\left[\sum_{f=1}^d \rho_{ij}^{(f)}\right]
    = \E_{(i,j)\sim\Pcal}[1]
    = 1.
\]
\end{proof}

The global structure therefore remains a normalized allocation. It admits interpretation as a system-level summary of which factors, on average under \(\Pcal\), exercise what proportion of pairwise control.

The local budgeting rule is now canonically determined. The natural question is whether the resulting global influence structure possesses an intrinsic geometric structure.

\subsection{Influence Exchange}

\begin{definition}[Influence Exchange]
Given two model states \(\theta\) and \(\theta'\), define the \emph{Influence Exchange} vector by
\begin{equation}
    \Exch(\theta,\theta') = \Ivec(\theta') - \Ivec(\theta).
\end{equation}
In coordinates, \(\Exch_f(\theta,\theta') = I_f(\theta') - I_f(\theta)\).
\end{definition}

\begin{corollary}[Zero-sum global exchange]
For any two model states \(\theta,\theta'\),
\begin{equation}
    \sum_{f=1}^d \Exch_f(\theta,\theta') = 0.
\end{equation}
\end{corollary}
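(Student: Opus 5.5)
The plan is to reduce the claim directly to the Global conservation proposition. By the definition of Influence Exchange, each coordinate is \(\Exch_f(\theta,\theta') = I_f(\theta') - I_f(\theta)\). Since \(d\) is finite, the sum over factors splits as a difference of two finite sums:
\[
    \sum_{f=1}^d \Exch_f(\theta,\theta')
    = \sum_{f=1}^d I_f(\theta') - \sum_{f=1}^d I_f(\theta).
\]

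Global conservation, applied separately at the model states \(\theta'\) and \(\theta\), says each of these sums equals \(1\). Their difference is therefore \(0\), which is the claim.

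The only step that needs care is the hypothesis under which conservation holds at each state. The Pair distribution definition evaluates global quantities with \(\Pcal\) supported on informative pairs, those with \(Z_{ij}>0\). Which pairs are informative depends on \(\theta\), so I would state that the standing convention is imposed at both \(\theta\) and \(\theta'\). For example, \(\Pcal\) can be restricted to pairs informative under both states, or an explicit convention for degenerate pairs can be supplied that preserves the simplex property.

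Once that is fixed, the simplex property of local shares holds pairwise at each state. Linearity of expectation then gives conservation at each state, and the corollary follows. There is no real obstacle beyond making this support convention explicit.
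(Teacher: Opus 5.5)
Your proposal is correct and matches the paper's proof: split the finite sum into \(\sum_f I_f(\theta') - \sum_f I_f(\theta)\), then apply global conservation at each state to get \(1-1=0\). Your point that the informative-pair support convention must hold at both \(\theta\) and \(\theta'\) (since \(Z_{ij}\) depends on the weights) is a sensible precision that the paper leaves implicit.
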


\begin{proof}
The claim follows immediately from global conservation:
\[
    \sum_{f=1}^d \Exch_f(\theta,\theta')
    = \sum_{f=1}^d I_f(\theta') - \sum_{f=1}^d I_f(\theta)
    = 1 - 1
    = 0.
\]
\end{proof}

The zero-sum identity is a conservation consequence. A deeper question concerns how the global influence structure is generated, how it responds when parameters change, and when it can remain entirely invariant under such changes. In the linear model, after reparameterizing by log-absolute weights, the answer takes a particularly clean form.

\subsection{Exchange geometry, Laplacian response, and zero-exchange rigidity}

For this subsection write \(p=(i,j)\) and
\[
    b_p^{(f)}=|x_{if}-x_{jf}|.
\]
Let
\[
    \mathcal F_*=\left\{f: |w_f|>0 \text{ and } \mathbb P_{p\sim\Pcal}(b_p^{(f)}>0)>0\right\}
\]
be the active nonzero factor set, and assume \(\Pcal\) is supported on pairs with \(\sum_{k\in\mathcal F_*} b_p^{(k)}>0\). Factors outside \(\mathcal F_*\) contribute only trivial zero share in the log-parameterized analysis below.

\begin{theorem}[Exchange geometry, Laplacian response, and zero-exchange rigidity]
\label{thm:exchange-geometry}
For \(f\in\mathcal F_*\), set \(u_f=\log |w_f|\) and
\begin{equation}
    \rho_p^{(f)}(u)
    =
    \frac{e^{u_f} b_p^{(f)}}{\sum_{k\in\mathcal F_*} e^{u_k}b_p^{(k)}}.
\end{equation}
Define
\[
    I_f(u)=\E_{p\sim\Pcal}[\rho_p^{(f)}(u)],
    \qquad
    I(u)=(I_f(u))_{f\in\mathcal F_*},
\]
\[
    \Phi(u)=
    \E_{p\sim\Pcal}\left[
    \log\!\left(\sum_{k\in\mathcal F_*} e^{u_k}b_p^{(k)}\right)
    \right],
\]
and the factor competition graph \(G_{\Pcal}=(\mathcal F_*,E_{\Pcal})\) by
\[
    \{f,g\}\in E_{\Pcal}
    \quad\Longleftrightarrow\quad
    \mathbb P_{p\sim\Pcal}(b_p^{(f)}b_p^{(g)}>0)>0.
\]
Then:
\begin{enumerate}[leftmargin=2em]
    \item \textbf{Gradient representation.} \(I(u)=\nabla\Phi(u)\).
    \item \textbf{Laplacian Jacobian.} The Jacobian \(J(u)=\nabla^2\Phi(u)\) is
    \begin{equation}
        J(u)=
        \E\!\left[\operatorname{Diag}(\rho_p(u))-\rho_p(u)\rho_p(u)^\top\right].
    \end{equation}
    In particular, for \(f\neq g\),
    \[
        J_{fg}(u)=-\E[\rho_p^{(f)}(u)\rho_p^{(g)}(u)]\le 0,
        \qquad
        J_{ff}(u)=\sum_{g\neq f}\E[\rho_p^{(f)}(u)\rho_p^{(g)}(u)].
    \]
    Hence \(J(u)\) is a symmetric positive-semidefinite Laplacian-type matrix supported on \(G_{\Pcal}\).
    \item \textbf{Finite exchange energy identity.} For any \(u,u'\), let \(\Delta u=u'-u\). Then
    \begin{equation}
        \left\langle \Delta u, I(u')-I(u)\right\rangle
        =
        \int_0^1 \Delta u^\top J(u+t\Delta u)\Delta u\,dt
        \ge 0.
    \end{equation}
    \item \textbf{Zero-exchange rigidity.}
    \[
        I(u')=I(u)
        \quad\Longleftrightarrow\quad
        \Delta u \text{ is constant on every connected component of }G_{\Pcal}.
    \]
    In particular, if \(G_{\Pcal}\) is connected, then \(I(u')=I(u)\) if and only if \(u'=u+c\one\) for some scalar \(c\).
\end{enumerate}
\end{theorem}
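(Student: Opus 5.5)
The plan is to work pair by pair and then take expectations. Fix a pair \(p\) in the support of \(\Pcal\) and set \(L_p(u)=\log\bigl(\sum_{k\in\mathcal F_*}e^{u_k}b_p^{(k)}\bigr)\). By the support assumption on \(\Pcal\), the argument of the logarithm is strictly positive, so \(L_p\) is smooth in \(u\). It is a log-sum-exp with nonnegative weights: factors with \(b_p^{(k)}=0\) simply drop out.

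A direct computation gives \(\partial_{u_f}L_p=e^{u_f}b_p^{(f)}/\sum_k e^{u_k}b_p^{(k)}=\rho_p^{(f)}(u)\). Differentiating once more gives the standard softmax Hessian, \(\partial_{u_g}\rho_p^{(f)}=\rho_p^{(f)}(\mathbf 1[f=g]-\rho_p^{(g)})\), that is, \(\nabla^2L_p=\operatorname{Diag}(\rho_p)-\rho_p\rho_p^\top\). Since \(\rho_p\in[0,1]\) and all derivatives of \(L_p\) are bounded uniformly in \(p\) (they are polynomials in \(\rho_p\)), dominated convergence justifies exchanging \(\E\) with differentiation. This yields items 1 and 2: \(I=\nabla\Phi\) and \(J=\E[\operatorname{Diag}(\rho_p)-\rho_p\rho_p^\top]\).

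The sign pattern of \(J\) follows from \(\sum_g\rho_p^{(g)}=1\), which gives \(\rho_p^{(f)}-(\rho_p^{(f)})^2=\rho_p^{(f)}\sum_{g\ne f}\rho_p^{(g)}\). For the Laplacian structure I will use the pointwise identity
\[
v^\top(\operatorname{Diag}(\rho_p)-\rho_p\rho_p^\top)v=\tfrac12\sum_{f,g}\rho_p^{(f)}\rho_p^{(g)}(v_f-v_g)^2 ,
\]
which is the variance of \(v\) under the distribution \(\rho_p\). Taking expectations gives \(v^\top J v=\frac12\sum_{f,g}\E[\rho_p^{(f)}\rho_p^{(g)}](v_f-v_g)^2\ge0\). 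Moreover, \(\E[\rho_p^{(f)}\rho_p^{(g)}]>0\) if and only if \(\{f,g\}\in E_{\Pcal}\), because \(\rho_p^{(f)}\rho_p^{(g)}>0\) exactly when \(b_p^{(f)}b_p^{(g)}>0\) (the factors \(e^{u}\) are positive). So \(J\) is supported on \(G_{\Pcal}\). Item 3 is then the fundamental theorem of calculus applied to \(t\mapsto\langle\Delta u,I(u+t\Delta u)\rangle\), whose derivative is \(\Delta u^\top J(u+t\Delta u)\Delta u\).

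The main obstacle is the forward direction of item 4; the reverse direction is easier.

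\emph{Reverse direction.} Suppose \(\Delta u\) is constant on components of \(G_{\Pcal}\). For a given \(p\), all factors with \(b_p^{(k)}>0\) are pairwise adjacent in \(G_{\Pcal}\). This needs a little care: adjacency requires positive probability, not merely a single pair, so I will argue almost surely. The positive-probability condition fails only on a \(\Pcal\)-null set of pairs, and since there are finitely many factor pairs, their union is still null. Hence, for \(\Pcal\)-almost every \(p\), the support of \(b_p\) lies in a single component. Shifting \(u\) by a constant on that component multiplies the numerator and denominator of \(\rho_p\) by the same factor, so \(\rho_p\) is unchanged almost surely and \(I\) is unchanged.

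\emph{Forward direction.} If \(I(u')=I(u)\), item 3 gives \(0=\int_0^1\Delta u^\top J(u_t)\Delta u\,dt\). The integrand is continuous and nonnegative, so it vanishes for every \(t\), in particular at \(t=0\). Then the quadratic-form identity forces \((\Delta u_f-\Delta u_g)^2=0\) for every edge \(\{f,g\}\), because every edge weight \(\E[\rho_p^{(f)}\rho_p^{(g)}]\) is strictly positive at any \(u\). Hence \(\Delta u\) is constant on each connected component. The connected case gives \(u'=u+c\one\).
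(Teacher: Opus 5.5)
Your proposal is correct and matches the paper's proof essentially step for step. Both compute per-pair log-sum-exp derivatives, write the softmax Hessian as the variance form \(\tfrac12\sum_{f,g}\rho_p^{(f)}\rho_p^{(g)}(v_f-v_g)^2\), apply the fundamental theorem of calculus along the segment \(u+t\Delta u\), and use the vanishing nonnegative integrand for the forward direction of rigidity. Your extra care fills in details the paper leaves implicit: dominated convergence for differentiating under \(\E\), the almost-sure clique argument in the reverse direction, and the fact that \(\E[\rho_p^{(f)}\rho_p^{(g)}]>0\) exactly on edges of \(G_{\Pcal}\) at every \(u\).
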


\begin{proof}
For one pair, define
\[
    \phi_p(u)=\log\!\left(\sum_{k\in\mathcal F_*} e^{u_k}b_p^{(k)}\right).
\]
Direct differentiation gives \(\partial_f\phi_p(u)=\rho_p^{(f)}(u)\). Taking expectation gives \(I(u)=\nabla\Phi(u)\). Differentiating once more gives
\[
    \partial_{fg}^2\phi_p(u)=\rho_p^{(f)}(u)\left(\mathbf{1}_{f=g}-\rho_p^{(g)}(u)\right),
\]
and hence the displayed Hessian formula after expectation.

For any vector \(v\),
\[
    v^\top J(u)v
    =
    \frac12\,\E_p\left[
    \sum_{f,g\in\mathcal F_*}
    \rho_p^{(f)}(u)\rho_p^{(g)}(u)(v_f-v_g)^2
    \right]\ge 0.
\]
This is the Laplacian quadratic form: off-diagonal entries are nonpositive co-activation weights and rows sum to zero.

The finite identity follows from the fundamental theorem of calculus along the segment \(u+t\Delta u\):
\[
    I(u')-I(u)=\int_0^1 J(u+t\Delta u)\Delta u\,dt.
\]
Taking the inner product with \(\Delta u\) yields the energy identity.

It remains to identify the zero-energy directions. The quadratic form above vanishes exactly when \(v_f=v_g\) for every pair of factors that co-occur with positive \(\Pcal\)-probability, and therefore exactly when \(v\) is constant on connected components of \(G_{\Pcal}\). If \(\Delta u\) has that form, every pairwise softmax \(\rho_p(u)\) is unchanged, so \(I(u')=I(u)\). Conversely, if \(I(u')=I(u)\), the energy identity has a nonnegative continuous integrand and value zero. Thus its integrand is identically zero, forcing \(\Delta u\) to lie in the same componentwise-constant nullspace. This proves the rigidity statement.
\end{proof}

Theorem~\ref{thm:exchange-geometry} elevates Influence Exchange from a difference vector to a response object generated by a convex potential. The competition graph encodes which factors are jointly priced in local pairwise contests; when it is connected, global influence identifies relative absolute weights up to the unavoidable common-scaling gauge \(u\mapsto u+c\one\).

The central chain of the paper can now be written compactly as
\begin{equation}
    s_i \;\longrightarrow\; \Delta_{ij}^{(f)} \;\longrightarrow\; \rho_{ij}^{(f)} \;\longrightarrow\; I_f \;\longrightarrow\; \Exch_f.
\end{equation}
Influence Exchange is therefore not an auxiliary metric appended post hoc. It is the model-to-model difference object induced by the complete pairwise accounting pipeline depicted in Figure~\ref{fig:pipeline}, and in the linear core it is constrained by the exchange geometry above.

\section{Ranking Changes as Transpositions in Permutation Space}

The preceding section described the continuous redistribution of pairwise control: local shares aggregate into a global influence structure, and log-absolute linear weights move that structure through a convex-gradient geometry. This section studies a different object: the discrete outcome change after margins cross zero. The goal is not to replace pairwise accounting with permutation theory, but to show that the local flip picture has a precise counterpart in permutation space.

\subsection{Rankings as permutations}

Under the no-ties assumption, a ranking of \(n\) items can be identified with a permutation \(\sigma \in S_n\). We use the position-map convention: \(\sigma(i)\) is the position of item \(i\), and smaller values mean higher rank.

A change from one ranking to another is therefore a change from one permutation to another. Standard results in permutation theory imply that any such change can be decomposed into transpositions, and that adjacent transpositions generate the whole group \(S_n\) \citep{bjornerbrenti2005}.

\begin{figure}[t]
    \centering
    \includegraphics[width=0.72\linewidth]{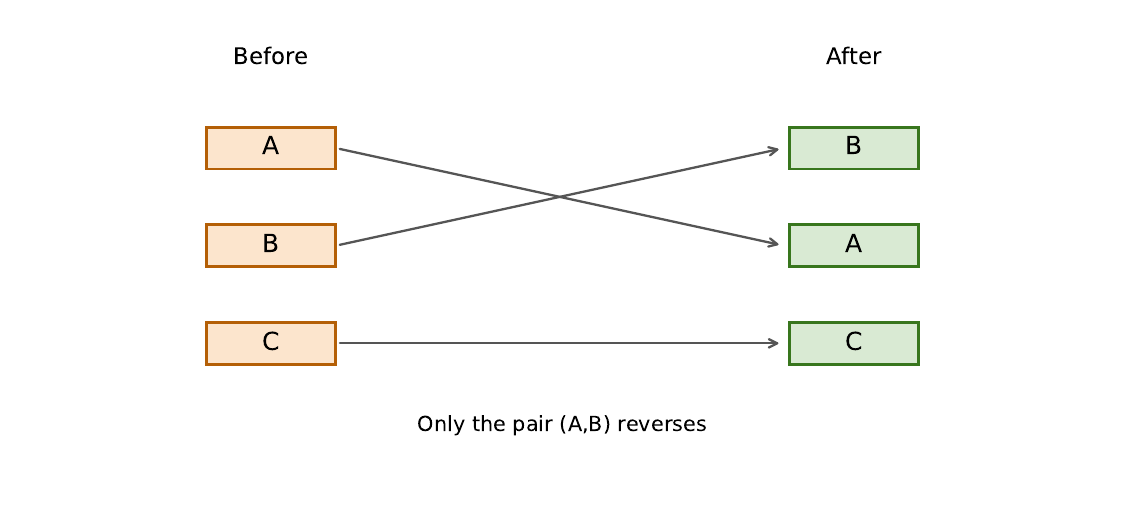}
    \caption{A three-item change from \(A \succ B \succ C\) to \(B \succ A \succ C\). Only the pair \((A,B)\) reverses. This is the discrete counterpart of a single pairwise margin crossing zero.}
    \label{fig:atomic-flip}
\end{figure}

\begin{example}[A single local flip]
Suppose the initial ranking is \(A \succ B \succ C\) and the updated ranking is \(B \succ A \succ C\). At the level of pairwise relations, only the order of \(A\) and \(B\) changes. In permutation language this is one adjacent transposition; in margin language it is one pairwise sign crossing.
\end{example}

\subsection{Kendall distance counts flips}

The Kendall tau distance between two rankings \(\sigma\) and \(\pi\) is
\begin{equation}
    \KT(\sigma,\pi)
    =
    \left|
    \left\{
    (i,j): i<j,\,
    (\sigma(i)-\sigma(j))(\pi(i)-\pi(j)) < 0
    \right\}
    \right|.
\end{equation}
It equals the number of discordant pairs and, equivalently, the minimum number of adjacent transpositions needed to transform one ranking into the other \citep{kendall1938new,bjornerbrenti2005}.

Kendall tau distance thus provides a natural measure of pairwise disorder; however, it remains purely a counting measure: it measures how many local pairwise relations changed, not which factors drove those changes. Influence Exchange is designed precisely to address this missing attribution layer. In other words, Kendall tau enumerates flips; the exchange-geometry theorem describes the global repricing and reallocation of pairwise control beneath them.

\subsection{Bridge back to geometry}

The permutation-theoretic language clarifies the discreteness of ranking outcomes. The following section complements it with a continuous view: score vectors move in \(\R^n\), pairwise boundaries appear as hyperplanes, and a discrete flip corresponds to a geometric boundary crossing.

\section{Hyperplane Geometry and Margin as a Normal Coordinate}

Permutation space describes the discrete outputs of a ranking system, while score space describes the continuous domain in which those outputs are induced. This section bridges these two perspectives and provides a geometric interpretation of pairwise margins and factor-level contributions.

\subsection{Score-space chambers}

Let \(s=(s_1,\dots,s_n)\in\R^n\). For each pair \(i \neq j\), define the boundary hyperplane
\begin{equation}
    H_{ij} = \{ s \in \R^n : s_i = s_j \}.
\end{equation}
The collection of hyperplanes \(\{H_{ij}\}_{i<j}\) is the braid arrangement of type \(A_{n-1}\) \citep{stanley2007hyperplane,humphreys1990reflection}. Its chambers correspond to strict total orders of the coordinates, hence to rankings of the \(n\) items.

\begin{figure}[t]
    \centering
    \includegraphics[width=0.74\linewidth]{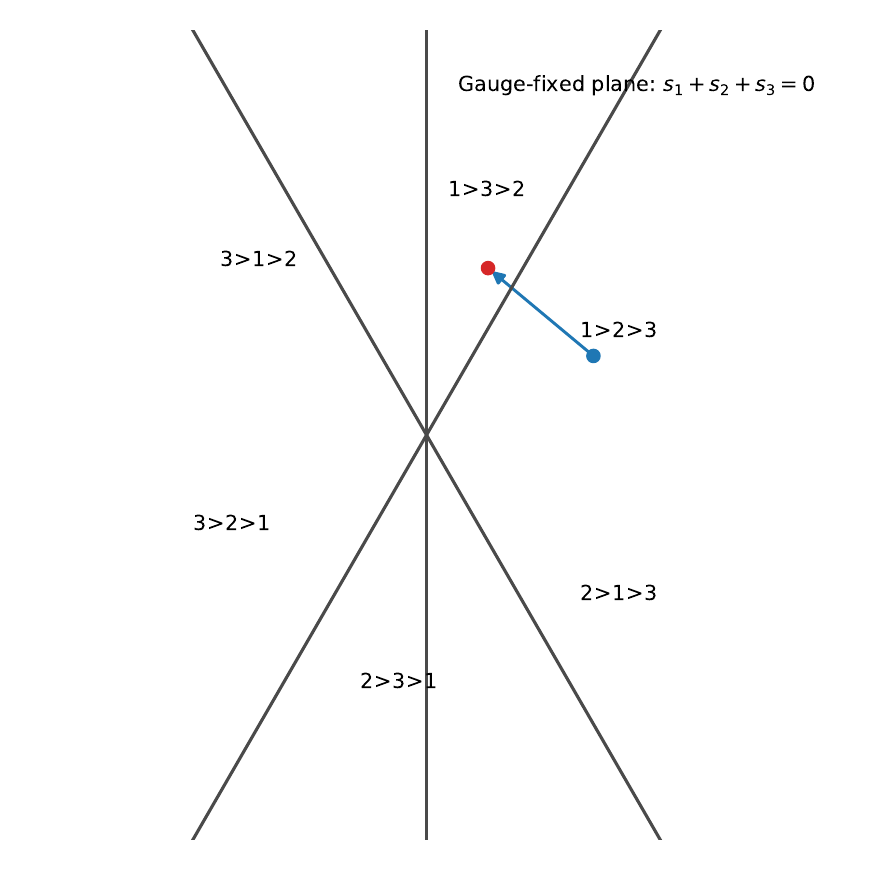}
    \caption{The \(n=3\) chamber picture in the gauge-fixed plane \(s_1+s_2+s_3=0\). Each chamber corresponds to one permutation. Crossing a boundary \(s_i=s_j\) flips one pairwise relation.}
    \label{fig:hyperplane-chambers}
\end{figure}

Thus, a ranking change corresponds geometrically to a chamber transition. If a perturbation path crosses \(H_{ij}\), then the order of \(i\) and \(j\) reverses. Figure~\ref{fig:hyperplane-chambers} illustrates this for the three-item case.

\subsection{Pairwise margin as a normal coordinate}

Let \(\e_i\) denote the \(i\)-th standard basis vector of \(\R^n\). The hyperplane \(H_{ij}\) has normal vector \(\e_i-\e_j\).

\begin{proposition}[Margin as normal coordinate]
For every score vector \(s \in \R^n\),
\begin{equation}
    \langle s, \e_i-\e_j \rangle = s_i - s_j = \Delta_{ij}.
\end{equation}
\end{proposition}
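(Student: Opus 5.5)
The identity follows directly from the definition of the Euclidean inner product on \(\R^n\) and the fact that pairing with a standard basis vector extracts a coordinate. The plan is therefore a one-line computation, followed by a remark connecting it to the earlier definitions.

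First, expand by bilinearity of the inner product:
\[
    \langle s, \e_i-\e_j\rangle = \langle s,\e_i\rangle - \langle s,\e_j\rangle .
\]
Next, use \(\langle s,\e_k\rangle = \sum_{l=1}^n s_l(\e_k)_l = s_k\), which holds because \((\e_k)_l\) equals \(1\) when \(l=k\) and \(0\) otherwise. This gives \(s_i - s_j\). Finally, invoke the definition of the pairwise margin \(\Delta_{ij}=s_i-s_j\) from Section~2 to conclude.

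There is no real obstacle; the only point worth stating explicitly is what "normal coordinate" means. The hyperplane \(H_{ij}\) is exactly the zero set of the linear functional \(s\mapsto\langle s,\e_i-\e_j\rangle\), so \(\e_i-\e_j\) is indeed a normal vector to \(H_{ij}\). Moreover, the signed Euclidean distance from \(s\) to \(H_{ij}\) is \(\Delta_{ij}/\|\e_i-\e_j\| = \Delta_{ij}/\sqrt2\). Thus \(\Delta_{ij}\) is, up to the fixed factor \(\sqrt2\), the signed normal coordinate relative to \(H_{ij}\), and its sign records the side of \(H_{ij}\) on which \(s\) lies. This is consistent with the earlier proposition that the margin sign determines pairwise order.
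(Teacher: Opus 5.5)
Your proof is correct and follows the paper's argument: expand by linearity into \(\langle s,\e_i\rangle-\langle s,\e_j\rangle\), read off the coordinates, and invoke the definition \(\Delta_{ij}=s_i-s_j\). Your closing remark that the signed distance to \(H_{ij}\) is \(\Delta_{ij}/\sqrt{2}\) agrees with the paper's subsequent remark on distance to the boundary.
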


\begin{proof}
By direct calculation,
\[
    \langle s, \e_i-\e_j \rangle
    = \langle s,\e_i\rangle - \langle s,\e_j\rangle
    = s_i - s_j
    = \Delta_{ij}.
\]
\end{proof}

\begin{remark}[Distance to the boundary]
Since \(\|\e_i-\e_j\| = \sqrt{2}\), the Euclidean distance from \(s\) to \(H_{ij}\) is
\begin{equation}
    \dist(s,H_{ij}) = \frac{|s_i-s_j|}{\sqrt{2}} = \frac{|\Delta_{ij}|}{\sqrt{2}}.
\end{equation}
Hence the magnitude of the pairwise margin is a scaled distance-to-boundary quantity.
\end{remark}

\subsection{Factor slices of the normal direction}

In the linear case, the decomposition
\[
    \Delta_{ij} = \sum_{f=1}^d \Delta_{ij}^{(f)}
\]
admits a geometric interpretation: each factor contributes a component of the total normal coordinate along the direction \(\e_i-\e_j\). The signed quantities \(\Delta_{ij}^{(f)}\) determine which side of the boundary each factor supports, while the magnitudes \(|\Delta_{ij}^{(f)}|\) quantify the extent of that factor's participation in the local contest. The influence share thus represents the normalized magnitude of factor-level participation along the relevant normal direction.

\subsection{Toward nonlinear geometry}

The hyperplane picture is exact in score space regardless of how the scores were produced. What changes in nonlinear models is the structure prior to score computation: the preimage of the boundary \(s_i=s_j\) in feature space need not be linear. Section~7 accordingly retains the pairwise boundary viewpoint while pulling it back to feature space and replacing global linear decomposition with path-based pairwise accounting.

\section{Nonlinear Extensions via Local Linearization and Path-Based Pairwise Attribution}

The linear case yields exact factor-level accounting as an immediate consequence of linearity. Nonlinearity does not invalidate the pairwise margin itself; rather, it prevents a unique factorwise decomposition, as cross-factor interactions introduce path dependence. This section extends the framework, proves the exact no-interaction boundary for factorwise pathwise uniqueness, and maintains a clear separation between formal statements and semantic choices.

\subsection{Pairwise invariance and factorwise ambiguity}

Let the score of item \(i\) be \(s_i = F(x_i)\), where \(F:\R^d \to \R\) is differentiable. The pairwise margin remains
\begin{equation}
    \Delta_{ij} = F(x_i) - F(x_j),
\end{equation}
so the ranking decision still depends only on its sign. The difference is that a factor-level decomposition of \(\Delta_{ij}\) is no longer unique.

\begin{example}[A bilinear interaction term]
Take \(F(x_1,x_2) = x_1x_2\). For two items \(i\) and \(j\),
\begin{align}
    \Delta_{ij}
    &= x_{i1}x_{i2} - x_{j1}x_{j2} \nonumber \\
    &= x_{j2}(x_{i1}-x_{j1}) + x_{i1}(x_{i2}-x_{j2}) \nonumber \\
    &= x_{i2}(x_{i1}-x_{j1}) + x_{j1}(x_{i2}-x_{j2}) \nonumber \\
    &= \frac{x_{i2}+x_{j2}}{2}(x_{i1}-x_{j1}) + \frac{x_{i1}+x_{j1}}{2}(x_{i2}-x_{j2}). \label{eq:bilinear-three-decompositions}
\end{align}
All three decompositions are valid, yet they assign different factor-level shares. The pairwise margin is unique; the factor-level accounting is not.
\end{example}

This illustrates a central observation of the paper: the pairwise ranking problem remains invariant, while cross-factor interaction renders the factor-level accounting path-dependent.

\subsection{Midpoint local linearization}

To recover local structure, define
\[
    \barx = \frac{x_i+x_j}{2},
    \qquad
    \deltax = x_i - x_j.
\]

\begin{proposition}[Midpoint local linearization]
If \(F\) is \(C^3\) in a neighborhood of \(\barx\), then
\begin{equation}
    \Delta_{ij}
    =
    F\!\left(\barx + \frac{\deltax}{2}\right)
    -
    F\!\left(\barx - \frac{\deltax}{2}\right)
    =
    \nabla F(\barx)^\top \deltax
    + O(\|\deltax\|^3).
\end{equation}
\end{proposition}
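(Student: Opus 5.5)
The first equality is just the definition of the pairwise margin rewritten in midpoint coordinates: since \(x_i=\barx+\deltax/2\) and \(x_j=\barx-\deltax/2\), we have \(\Delta_{ij}=F(x_i)-F(x_j)=F(\barx+\deltax/2)-F(\barx-\deltax/2)\). The substance of the statement is the second equality. The plan is a symmetric Taylor expansion around the midpoint, in which the even-order terms cancel.

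Set \(h=\deltax/2\) and define the one-variable functions \(g_\pm(t)=F(\barx\pm t h)\) for \(t\in[0,1]\). For \(\|\deltax\|\) small enough, the whole segment \([\barx-h,\barx+h]\) lies in the neighborhood where \(F\) is \(C^3\). We Taylor-expand \(F(\barx\pm h)\) to second order with a third-order integral remainder:
\[
F(\barx\pm h)=F(\barx)\pm\nabla F(\barx)^\top h+\tfrac12 h^\top\nabla^2F(\barx)h\pm R_\pm,
\qquad
R_\pm=\int_0^1\frac{(1-t)^2}{2}\,D^3F(\barx\pm th)[h,h,h]\,dt.
\]
Subtracting the two expansions, the constant terms cancel and so do the quadratic (Hessian) terms, because they are even in \(h\). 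The gradient terms add up to \(2\nabla F(\barx)^\top h=\nabla F(\barx)^\top\deltax\). What remains is \(R_++R_-\).

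It then remains to bound the remainder. Choose a compact closed ball \(B\) around \(\barx\) inside the \(C^3\) neighborhood. On \(B\) the third derivative is continuous, so its operator norm is bounded by some constant \(M=\sup_B\|D^3F\|\). This gives \(|R_\pm|\le \tfrac{M}{6}\|h\|^3\), and hence
\[
|\Delta_{ij}-\nabla F(\barx)^\top\deltax|\le \tfrac{M}{3}\|h\|^3=\tfrac{M}{24}\|\deltax\|^3,
\]
which is the claimed \(O(\|\deltax\|^3)\).

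There is no serious obstacle in this argument. The one point that needs care is the meaning of the \(O\)-term: it should be understood with \(\barx\) fixed (or ranging over a compact set) and \(\deltax\to0\), so that the segment stays inside the \(C^3\) neighborhood and \(M\) is uniform. I would state this explicitly. I would also remark that the cancellation of the second-order term is what gives cubic rather than quadratic accuracy. This is exactly why the midpoint gradient is preferred over a gradient taken at either endpoint.
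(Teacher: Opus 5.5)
Your proof is correct and follows essentially the same route as the paper: a symmetric Taylor expansion about \(\barx\), in which the constant and Hessian terms cancel and only the gradient term \(\nabla F(\barx)^\top\deltax\) and a cubic remainder survive. The only difference is presentational. You use the multivariate integral remainder with the explicit uniform bound \(\tfrac{M}{24}\|\deltax\|^3\), while the paper uses a Lagrange remainder for \(\phi(t)=F(\barx+t\deltax)\) and leaves the constant implicit, so your version makes the meaning of the \(O\)-term more precise.
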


The proof, which proceeds by symmetric Taylor expansion, is given in Appendix~A. The essential observation is that even-order terms cancel in this midpoint expansion. Consequently, the leading control term is
\begin{equation}
    \Delta_{ij} \approx \sum_{f=1}^d \partial_f F(\barx)\,(x_{if}-x_{jf}).
\end{equation}

\begin{remark}[Local pricing]
The coefficients \(\partial_f F(\barx)\) act as local prices for feature differences around the pair midpoint. Unlike linear weights, these prices depend on the pair and on location. We use the term \emph{pricing} only as an interpretation of local sensitivity, not as an economic claim.
\end{remark}

\subsection{Path-based pairwise attribution}

Local linearization is informative but inherently approximate. For an exact decomposition of the nonlinear pairwise margin, let \(\gamma:[0,1]\to \R^d\) be a differentiable path from \(x_j\) to \(x_i\), that is, \(\gamma(0)=x_j\) and \(\gamma(1)=x_i\).

\begin{definition}[Path-based pairwise attribution]
Define the factor-level path contribution
\begin{equation}
    \Delta_{ij}^{\gamma,(f)}
    =
    \int_0^1 \partial_f F(\gamma(t))\,\dot{\gamma}_f(t)\,dt.
\end{equation}
\end{definition}

\begin{proposition}[Completeness of path-based pairwise attribution]
For any differentiable path \(\gamma\) from \(x_j\) to \(x_i\),
\begin{equation}
    \sum_{f=1}^d \Delta_{ij}^{\gamma,(f)} = \Delta_{ij}.
\end{equation}
\end{proposition}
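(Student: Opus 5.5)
The proof is a direct application of the chain rule followed by the fundamental theorem of calculus along \(\gamma\). First I swap the finite sum over factors with the integral, which is just linearity of the integral:
\[
    \sum_{f=1}^d \Delta_{ij}^{\gamma,(f)}
    = \int_0^1 \sum_{f=1}^d \partial_f F(\gamma(t))\,\dot{\gamma}_f(t)\,dt
    = \int_0^1 \nabla F(\gamma(t))^\top \dot{\gamma}(t)\,dt .
\]
Next I define the scalar function \(h(t)=F(\gamma(t))\) on \([0,1]\). Because \(F\) is differentiable and \(\gamma\) is differentiable, the multivariate chain rule gives \(h'(t)=\nabla F(\gamma(t))^\top\dot\gamma(t)\). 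So the integrand above is exactly \(h'(t)\).

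The fundamental theorem of calculus then gives
\[
    \int_0^1 h'(t)\,dt = h(1)-h(0) = F(x_i)-F(x_j) = \Delta_{ij},
\]
using the endpoint conditions \(\gamma(0)=x_j\) and \(\gamma(1)=x_i\) together with the definition of the nonlinear pairwise margin. This is the same argument used in the finite energy identity of Theorem~\ref{thm:exchange-geometry}, applied here to \(F\circ\gamma\) instead of \(\nabla\Phi\).

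The one step that needs care is the regularity required for the fundamental theorem of calculus in its form \(\int_0^1 h' = h(1)-h(0)\). That form needs \(h'\) to be integrable. Differentiability of \(h\) alone guarantees this only in the Henstock--Kurzweil sense, not the Riemann or Lebesgue sense. I would handle this in either of two ways:
\begin{itemize}
    \item State the standing convention that \(F\) is \(C^1\) and \(\gamma\) is \(C^1\), or piecewise \(C^1\), as is standard for Integrated Gradients. Then \(h'\) is continuous (piecewise continuous in the piecewise case), the integral is an ordinary Riemann integral, and the piecewise case follows by summing the identity over the pieces, where the intermediate endpoint values telescope.
    \item Alternatively, read the integral in the definition of path-based pairwise attribution as whatever integral makes each factor's integrand integrable. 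Then the identity holds with no further assumptions.
\end{itemize}
I would adopt the first option and remark that it covers straight-line paths, which is the case relevant to Pairwise Integrated Gradients.
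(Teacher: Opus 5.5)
Your proof is correct and is the same argument as the paper's: the chain rule identifies the summed integrand with \(\frac{d}{dt}F(\gamma(t))\), and the fundamental theorem of calculus with \(\gamma(0)=x_j\), \(\gamma(1)=x_i\) gives \(F(x_i)-F(x_j)=\Delta_{ij}\). Your integrability caveat is a legitimate refinement that the paper's proof skips. Under bare differentiability, \((F\circ\gamma)'\) need not be Riemann or Lebesgue integrable, and the individual factor integrals need not even exist, so adopting the \(C^1\) or piecewise \(C^1\) convention is the right repair (Theorem~\ref{thm:interaction-curvature} later uses piecewise \(C^1\) paths).
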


\begin{proof}
By the chain rule,
\[
    \frac{d}{dt}F(\gamma(t)) = \nabla F(\gamma(t))^\top \dot{\gamma}(t)
    = \sum_{f=1}^d \partial_f F(\gamma(t))\,\dot{\gamma}_f(t).
\]
Integrating from \(0\) to \(1\) and applying the fundamental theorem of calculus yields
\[
    F(x_i)-F(x_j) = \int_0^1 \sum_{f=1}^d \partial_f F(\gamma(t))\,\dot{\gamma}_f(t)\,dt
    = \sum_{f=1}^d \Delta_{ij}^{\gamma,(f)}.
\]
\end{proof}

Completeness holds exactly, but the componentwise allocation is generally path-dependent. The total line integral is path-independent because it derives from the scalar potential \(F\); the individual factor-level components need not be.

\subsection{Pairwise Integrated Gradients}

The most natural path is the straight line
\begin{equation}
    \gamma_{\mathrm{SL}}(t) = x_j + t(x_i-x_j).
\end{equation}
It yields the pairwise analogue of Integrated Gradients \citep{sundararajan2017axiomatic,aumann1974values}:
\begin{equation}
    \Delta_{ij}^{\mathrm{PIG},(f)}
    =
    (x_{if}-x_{jf})
    \int_0^1 \partial_f F(x_j+t(x_i-x_j))\,dt.
\end{equation}

\begin{proposition}[Linear recovery]
If \(F(x)=w^\top x\) is linear, then for every path and every factor \(f\),
\[
    \Delta_{ij}^{\gamma,(f)} = w_f(x_{if}-x_{jf}) = \Delta_{ij}^{(f)}.
\]
In particular, Pairwise Integrated Gradients recovers the linear decomposition exactly.
\end{proposition}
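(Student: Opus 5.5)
The plan is to compute the integrand directly. For \(F(x)=w^\top x\), the gradient is constant, \(\nabla F(x)=w\) for every \(x\in\R^d\). In particular \(\partial_f F(\gamma(t))=w_f\) for all \(t\in[0,1]\), whatever path \(\gamma\) we take.

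Substituting into the definition of path-based pairwise attribution gives
\[
    \Delta_{ij}^{\gamma,(f)}=\int_0^1 w_f\,\dot\gamma_f(t)\,dt=w_f\bigl(\gamma_f(1)-\gamma_f(0)\bigr).
\]
The constant \(w_f\) comes out of the integral, and the remaining one-dimensional integral is evaluated by the fundamental theorem of calculus applied to the coordinate function \(\gamma_f\). This requires only that \(\dot\gamma_f\) be integrable, which we take as part of the standing differentiability assumption on \(\gamma\), as already used in the completeness proposition.

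The endpoint conditions \(\gamma(0)=x_j\) and \(\gamma(1)=x_i\) then give \(w_f(x_{if}-x_{jf})\). By the definition of the linear factor-level local contribution, this equals \(\Delta_{ij}^{(f)}\).

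For the Pairwise Integrated Gradients claim, we specialize to \(\gamma_{\mathrm{SL}}\), or equivalently read off the displayed PIG formula directly: the integral \(\int_0^1 w_f\,dt\) equals \(w_f\).

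There is no real obstacle. The only point worth a remark is that path-independence here holds factor by factor, not just for the total as in the completeness proposition, because each summand \(w_f\dot\gamma_f\) is itself an exact derivative, namely of \(w_f\gamma_f\). We will also note that, by the linear margin decomposition proposition, these factorwise contributions sum to \(\Delta_{ij}\), which is consistent with completeness.
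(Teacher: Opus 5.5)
Your proposal is correct and follows the paper's proof essentially verbatim: the constant gradient \(\partial_f F\equiv w_f\) is pulled out of the path integral, and the fundamental theorem of calculus applied to \(\gamma_f\) together with the endpoint conditions gives \(w_f(x_{if}-x_{jf})=\Delta_{ij}^{(f)}\). Your added remarks on integrability of \(\dot\gamma_f\) and on factorwise exactness are sound but not needed beyond what the paper does.
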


\begin{proof}
For a linear function, \(\partial_f F \equiv w_f\) is constant. Hence
\[
    \Delta_{ij}^{\gamma,(f)}
    = \int_0^1 w_f\,\dot{\gamma}_f(t)\,dt
    = w_f(\gamma_f(1)-\gamma_f(0))
    = w_f(x_{if}-x_{jf}).
\]
\end{proof}

Straight-line path attribution is therefore not an extraneous construction; it constitutes a direct generalization of the linear factor decomposition.

\begin{figure}[t]
    \centering
    \includegraphics[width=0.76\linewidth]{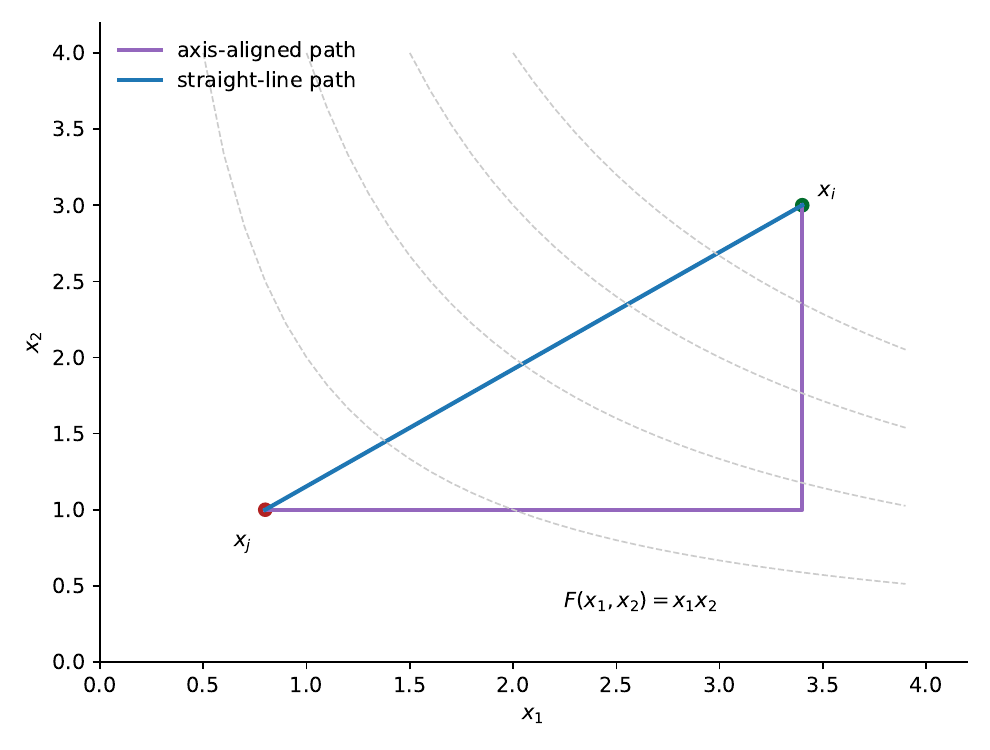}
    \caption{Two paths between the same pair of feature vectors in a bilinear model. The total margin change is identical, but the factor-level decomposition depends on the path semantics.}
    \label{fig:nonlinear-paths}
\end{figure}

\subsection{Interaction curvature and factorwise path dependence}

The difficulty in nonlinear models lies not with the pairwise margin itself, which remains exact, but with the uniqueness of the factor-level decomposition used to account for that margin. Differential-form notation gives a compact way to state the obstruction: the factorwise contribution associated with coordinate \(f\) is the line integral of the \(1\)-form \(\omega_f=\partial_fF(x)\,dx_f\). We use standard differential-form facts such as Stokes' theorem and the Poincar\'e lemma \citep{lee2012smooth}. The next theorem identifies the precise obstruction to path-independent factorwise accounting.

\begin{theorem}[Interaction-curvature characterization of factorwise path dependence]
\label{thm:interaction-curvature}
Let \(\Omega=\Omega_1\times\cdots\times\Omega_d\subset\R^d\) be an open rectangle, with each \(\Omega_f\subset\R\) an open interval, and let \(F\in C^2(\Omega)\). For each factor \(f\), define
\[
    \omega_f=\partial_f F(x)\,dx_f.
\]
For every piecewise \(C^1\) path \(\gamma\) from \(x_j\) to \(x_i\) in \(\Omega\), define
\[
    \Delta_{ij}^{\gamma,(f)}=\int_\gamma \omega_f.
\]
Then:
\begin{enumerate}[leftmargin=2em]
    \item \textbf{Continuous curl.}
    \begin{equation}
        d\omega_f=
        \sum_{g\neq f}\partial_{gf}F(x)\,dx_g\wedge dx_f.
    \end{equation}
    \item \textbf{Surface-flux formula.} If two paths \(\gamma_0,\gamma_1\) with the same endpoints bound an oriented piecewise smooth surface \(\Sigma\), then
    \begin{equation}
        \Delta_{ij}^{\gamma_1,(f)}-\Delta_{ij}^{\gamma_0,(f)}
        =
        \oint_{\gamma_1-\gamma_0}\omega_f
        =
        \int_\Sigma d\omega_f
        =
        \sum_{g\neq f}\int_\Sigma \partial_{gf}F(x)\,dx_g\wedge dx_f.
    \end{equation}
    \item \textbf{Fixed-factor equivalence.} For a fixed factor \(f\), the following are equivalent: \(\Delta_{ij}^{\gamma,(f)}\) is path-independent for all endpoints and paths; \(\partial_{gf}F\equiv 0\) for every \(g\neq f\); and \(\omega_f\) is the differential of a one-variable primitive \(\Phi_f(x_f)\), equivalently \(\partial_fF\) depends only on \(x_f\).
    \item \textbf{Global equivalence.} The following are equivalent: every factor attribution \(\Delta_{ij}^{\gamma,(f)}\) is path-independent; all mixed partials \(\partial_{gf}F\) vanish identically for \(g\neq f\); and
    \begin{equation}
        F(x)=c+\sum_{f=1}^d \Phi_f(x_f)
    \end{equation}
    on \(\Omega\), for some one-variable potentials \(\Phi_f:\Omega_f\to\R\).
\end{enumerate}
\end{theorem}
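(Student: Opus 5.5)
The four parts build on one another, so I will prove them in order, treating \(\omega_f=\partial_fF\,dx_f\) as a \(C^1\) one-form on the rectangle \(\Omega\).

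\textbf{Part 1: the curl.} This is a direct computation. We have
\[
d\omega_f=d(\partial_fF)\wedge dx_f=\sum_{g}\partial_g\partial_fF\,dx_g\wedge dx_f .
\]
The \(g=f\) term drops out because \(dx_f\wedge dx_f=0\). Since \(F\in C^2\), Schwarz's theorem lets us write \(\partial_g\partial_f F=\partial_{gf}F\).

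\textbf{Part 2: the surface-flux formula.} Path additivity gives
\[
\Delta^{\gamma_1,(f)}-\Delta^{\gamma_0,(f)}=\oint_{\gamma_1-\gamma_0}\omega_f .
\]
The hypothesis is that \(\Sigma\) has oriented boundary \(\gamma_1-\gamma_0\). Stokes' theorem for piecewise smooth surfaces and a \(C^1\) form then converts the loop integral into \(\int_\Sigma d\omega_f\), and Part 1 supplies the integrand.

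\textbf{Part 3: one fixed factor.} I will prove the cycle (b)\(\Rightarrow\)(c)\(\Rightarrow\)(a)\(\Rightarrow\)(b).
\begin{itemize}
\item (b)\(\Rightarrow\)(c). Suppose \(\partial_g(\partial_fF)=0\) for all \(g\neq f\). Because \(\Omega\) is a product of intervals, any two points differing only in the coordinates \(g\neq f\) can be joined by coordinate segments staying in \(\Omega\). Hence \(\partial_fF\) depends only on \(x_f\); call it \(\phi_f(x_f)\), which is continuous. Take \(\Phi_f\) to be an antiderivative of \(\phi_f\) on \(\Omega_f\). Then \(\omega_f=d\Phi_f(x_f)\).
\item (c)\(\Rightarrow\)(a). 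The form \(\omega_f\) is exact, so its line integral along any piecewise \(C^1\) path equals \(\Phi_f(x_{if})-\Phi_f(x_{jf})\) by the fundamental theorem of calculus. This is path-independent.
\item (a)\(\Rightarrow\)(b). This is the step I expect to need the most care, because it must turn a global path statement into a pointwise one. Path-independence means every closed loop integral vanishes. Fix \(x\in\Omega\) and \(g\neq f\), and take a small axis-aligned rectangle \(R_\epsilon\) in the \((x_g,x_f)\)-plane through \(x\), which lies in \(\Omega\) for small \(\epsilon\). Parts 1 and 2 give
\[
0=\oint_{\partial R_\epsilon}\omega_f=\pm\int_{R_\epsilon}\partial_{gf}F\,dx_g\,dx_f .
\]
Dividing by the area \(\epsilon^2\) and using continuity of \(\partial_{gf}F\), we get \(\partial_{gf}F(x)=0\).

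If one wants to avoid Stokes, the boundary integral can be computed by hand. Only the two sides parallel to \(x_f\) contribute, and their difference is \(\int[\partial_fF(\cdot,x_g+\epsilon)-\partial_fF(\cdot,x_g)]\,dx_f=\epsilon^2\partial_{gf}F(x)+o(\epsilon^2)\) by the mean value theorem.
\end{itemize}

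\textbf{Part 4: all factors at once.} Applying Part 3 to each \(f\) gives the equivalence of "every factor attribution is path-independent" with "all mixed partials vanish". It remains to connect these with the additive form.
\begin{itemize}
\item Additive \(\Rightarrow\) no mixed partials. If \(F=c+\sum_f\Phi_f(x_f)\), then \(\partial_fF=\Phi_f'(x_f)\), so \(\partial_{gf}F=0\) for \(g\neq f\).
\item No mixed partials \(\Rightarrow\) additive. Part 3 provides primitives \(\Phi_f\) with \(\partial_fF=\Phi_f'(x_f)\). Set \(G=F-\sum_f\Phi_f(x_f)\). Then \(\nabla G=0\) on the connected set \(\Omega\), so \(G\) is a constant \(c\).
\end{itemize}

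The main obstacle is the localization step (a)\(\Rightarrow\)(b) in Part 3. The rectangle-shrinking argument handles it cleanly because \(\Omega\) is a product of intervals: small coordinate rectangles always fit inside \(\Omega\), and continuity of the mixed partial lets us pass to the limit. The same convexity of \(\Omega\) is what makes the integration steps in (b)\(\Rightarrow\)(c) and in Part 4 valid.
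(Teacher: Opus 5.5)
Your proof is correct. Parts 1, 2 and 4 match the paper's argument; your $G=F-\sum_f\Phi_f(x_f)$ with $\nabla G=0$ on the connected set $\Omega$ is just a tidier version of the paper's ``integrate the coordinate derivatives.''

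The genuine difference is in Part 3. The paper proves the fixed-factor equivalence through the Poincar\'e lemma on the rectangle (closed $\Rightarrow$ exact) and the standard fact that exactness is equivalent to path-independent line integrals. That gives path-independence $\Leftrightarrow$ $d\omega_f=0$ $\Leftrightarrow$ $\partial_{gf}F\equiv 0$. Only afterwards does it use the product structure to conclude that $\partial_fF$ depends on $x_f$ alone.

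You avoid the Poincar\'e lemma entirely. For (b)$\Rightarrow$(c), the product structure directly yields the one-variable primitive $\Phi_f$. This is exactness in the only form the theorem needs. A general primitive on $\Omega$ from the Poincar\'e lemma would not by itself depend only on $x_f$, so the paper must invoke the product structure anyway. Your version also avoids the fact that $\omega_f$ is only $C^1$, not smooth as the paper's phrasing presumes. For (a)$\Rightarrow$(b), you replace the paper's implicit ``exact $\Rightarrow$ closed'' step ($d\circ d=0$ applied to a $C^2$ potential) with the shrinking-rectangle localization.

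What each route buys: the paper's is shorter and stays within differential-form language. Yours is self-contained, needing only Green's theorem or even just the mean value theorem. It also exhibits $\partial_{gf}F(x)$ as the limit of $\epsilon^{-2}\oint_{\partial R_\epsilon}\omega_f$, so the mixed partial is literally the infinitesimal loop defect. That makes the ``curvature'' reading of Part 2 concrete, and it shows that vanishing loop integrals over arbitrarily small coordinate rectangles already suffice.
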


\begin{proof}
Because \(d(\partial_fF)=\sum_g\partial_{gf}F\,dx_g\), we have
\[
    d\omega_f
    =
    \sum_g \partial_{gf}F\,dx_g\wedge dx_f.
\]
The \(g=f\) term vanishes since \(dx_f\wedge dx_f=0\), proving the continuous-curl formula.

For two paths with the same endpoints, the oriented loop \(\gamma_1-\gamma_0\) has boundary \(\partial\Sigma\). Stokes' theorem gives
\[
    \int_{\gamma_1}\omega_f-\int_{\gamma_0}\omega_f
    =
    \oint_{\gamma_1-\gamma_0}\omega_f
    =
    \int_\Sigma d\omega_f,
\]
and substituting the first formula gives the displayed flux identity.

On the open rectangle \(\Omega\), the Poincar\'e lemma implies that a smooth closed \(1\)-form is exact, and exactness is equivalent to path-independent line integrals. Thus fixed-factor path-independence is equivalent to \(d\omega_f=0\), which is equivalent to \(\partial_{gf}F\equiv 0\) for all \(g\neq f\). Because \(\Omega\) is a product domain, this condition implies that \(\partial_fF\) is a function of \(x_f\) alone, so \(\omega_f\) is the differential of a one-variable primitive.

Applying the fixed-factor statement to every \(f\) gives the global equivalence with vanishing mixed partials. If all mixed partials vanish, then each \(\partial_fF\) depends only on \(x_f\); integrating the coordinate derivatives yields \(F(x)=c+\sum_f\Phi_f(x_f)\). The reverse implication is immediate from differentiating the additive representation.
\end{proof}

Theorem~\ref{thm:interaction-curvature} sharpens the nonlinear message. Path semantics are not required merely because the model is nonlinear; they are required precisely when mixed-partial interaction curvature renders the factor forms non-exact. In the additive/no-interaction regime, factorwise pathwise uniqueness is recovered.

\subsection{Direct attribution of the pairwise margin function}

The appropriate explanatory target is the pairwise margin itself, rather than two absolute scores explained separately and subsequently subtracted. Define the pairwise margin function
\begin{equation}
    G(u,v) = F(u)-F(v), \qquad (u,v)\in\R^{2d}.
\end{equation}
Direct attribution of \(G\) respects the pairwise semantics of ranking. By contrast, separately attributing \(F(u)\) and \(F(v)\) and then subtracting the results may introduce baseline choices and normalization artifacts that are irrelevant to the pairwise question.

Alternative attribution semantics, including SHAP-style additive explanations \citep{lundberg2017unified,shapley1953value}, can also be used at the pairwise level. The framework only requires a complete pairwise decomposition. Theorem~\ref{thm:interaction-curvature} explains why the semantic choice cannot generally be eliminated: outside the additive/no-interaction regime, factorwise path uniqueness fails even though the scalar pairwise margin remains exact.

\subsection{Pairwise boundary geometry in joint feature space}

The score-space boundary \(s_i=s_j\) remains linear in score coordinates, but its preimage in feature space may be nonlinear. Using the pairwise margin function \(G\), define the pairwise boundary
\begin{equation}
    \mathcal{B} = \{(u,v)\in \R^{2d} : G(u,v)=0\}.
\end{equation}
When \(\nabla G(u,v)\neq 0\), this is a smooth hypersurface whose normal vector is
\begin{equation}
    \nabla G(u,v) =
    \begin{bmatrix}
        \nabla F(u) \\
        -\nabla F(v)
    \end{bmatrix}.
\end{equation}
Thus, the geometric picture persists in the nonlinear setting: pairwise margin remains a normal-coordinate object, but the relevant boundary is now generally curved in joint feature space.

\subsection{Influence Exchange under chosen attribution semantics}

Suppose a chosen pairwise attribution rule produces factor-level quantities \(\widetilde{\Delta}_{ij}^{(f)}\) such that
\[
    \sum_{f=1}^d \widetilde{\Delta}_{ij}^{(f)} = \Delta_{ij}.
\]
Then the rest of the framework proceeds unchanged:
\begin{equation}
    \widetilde{\rho}_{ij}^{(f)} =
    \frac{|\widetilde{\Delta}_{ij}^{(f)}|}{\sum_k |\widetilde{\Delta}_{ij}^{(k)}|},
    \qquad
    \widetilde{I}_f = \E_{(i,j)\sim\Pcal}\left[\widetilde{\rho}_{ij}^{(f)}\right].
\end{equation}
Outside the additive/no-interaction regime of Theorem~\ref{thm:interaction-curvature}, the additional ingredient is semantic disclosure: one must state the chosen attribution rule and the factor granularity to which it is applied.

\subsection{Parameter updates, boundary crossing, and local repricing}

For a parameterized nonlinear model \(F_\theta\), a small parameter update \(\delta\theta\) changes the pairwise margin as
\begin{equation}
    \Delta_{ij}(\theta+\delta\theta)
    =
    \Delta_{ij}(\theta)
    +
    \left\langle
    \nabla_\theta F_\theta(x_i) - \nabla_\theta F_\theta(x_j),
    \delta\theta
    \right\rangle
    +
    O(\|\delta\theta\|^2).
\end{equation}
This differential view suggests two mechanisms for nonlinear Influence Exchange. First, a parameter update may push a pairwise margin across zero, producing an actual boundary crossing. Second, even without an immediate flip, it may alter local gradients and thereby reprice factor-level path contributions. Such updates can also change the mixed-partial field \(\partial_{gf}F_\theta\), changing the degree of factorwise path dependence characterized by Theorem~\ref{thm:interaction-curvature}. We treat this as a local interpretation of parameter perturbations, not as a complete dynamical theory of training.

\section{Exactness, Cycle Consistency, and Hodge-Like Views}

The framework thus far has treated pairwise margins as local objects associated with pairs. This section addresses a more structural question: under what conditions can an antisymmetric pairwise field be generated by a global score function, and what changes when factor-level decompositions cease to be individually exact?

Theorem~\ref{thm:interaction-curvature} identified mixed-partial interaction curvature as the continuous source of factorwise path dependence. We now examine the discrete edge-field shadow of the same phenomenon on finite item sets. After one passes from smooth feature-space paths to sampled pairwise comparisons, unresolved factorwise interaction tension appears as cycle inconsistency and, locally, as triangle curl.

\subsection{Edge space versus score-generated subspace}

Let \(A=(A_{ij})\) be an antisymmetric pairwise field on \(n\) items, so \(A_{ij}=-A_{ji}\). The space of all such fields has dimension \(\binom{n}{2}\). By contrast, score-generated fields have the form
\begin{equation}
    A_{ij} = s_i - s_j
\end{equation}
for some \(s\in\R^n\), and this space has dimension only \(n-1\) because \(s\) is defined only up to an additive constant.

\begin{proposition}[Dimension gap]
The codimension of the score-generated subspace inside the antisymmetric edge space is
\begin{equation}
    \binom{n}{2}-(n-1)=\frac{(n-1)(n-2)}{2}.
\end{equation}
\end{proposition}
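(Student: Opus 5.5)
The plan is to treat both spaces as finite-dimensional real vector spaces, compute each dimension, and subtract. The codimension of a subspace \(W\subseteq V\) is \(\dim V-\dim W\), so nothing beyond linear algebra is needed.

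\textbf{Edge space.} An antisymmetric field \(A\) is determined by its values \(A_{ij}\) on ordered pairs with \(i<j\). The diagonal is forced to vanish, since \(A_{ii}=-A_{ii}\), and the entries below the diagonal are fixed by \(A_{ji}=-A_{ij}\). So the map \(A\mapsto (A_{ij})_{i<j}\) is a linear isomorphism onto \(\R^{\binom{n}{2}}\), which gives dimension \(\binom{n}{2}\).

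\textbf{Score-generated subspace.} I will realize it as the image of the linear map \(\delta:\R^n\to\) (edge space), \((\delta s)_{ij}=s_i-s_j\). This is the coboundary map of the complete graph \(K_n\). By rank--nullity, \(\dim\operatorname{im}\delta=n-\dim\ker\delta\), so the one point needing an argument is that \(\ker\delta=\operatorname{span}(\one)\).

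\textbf{Kernel.} If \(s_i=s_j\) for all pairs, then \(s\) is constant. Conversely, every constant vector lies in the kernel. Hence \(\dim\ker\delta=1\) and the image has dimension \(n-1\). This is the "defined up to an additive constant" remark made precise.

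\textbf{Arithmetic.} The codimension is
\[
\binom{n}{2}-(n-1)=\frac{n(n-1)-2(n-1)}{2}=\frac{(n-1)(n-2)}{2}.
\]

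There is no real obstacle here. The only step deserving care is the kernel computation: it uses connectivity of \(K_n\), since every pair is compared. I would note in passing that on a sampled comparison graph with \(c\) connected components the same argument gives the image dimension \(n-c\) and the codimension \(|E|-n+c\), the cycle rank. That connects to the cycle-consistency and curl discussion that follows, and it equals \((n-1)(n-2)/2\) for \(K_n\).
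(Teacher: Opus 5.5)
Your proposal is correct and follows the same route as the paper: subtract \(n-1\) from \(\binom{n}{2}\), where \(n-1\) comes from removing the gauge direction \(\one\). Your rank--nullity argument with \(\ker\delta=\operatorname{span}(\one)\) spells out the paper's one-line remark about quotienting out \(s\mapsto s+c\one\).
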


\begin{proof}
Subtracting dimensions gives the stated formula. The \(n-1\) dimension of score-generated fields follows from quotienting out the gauge direction \(s\mapsto s+c\one\).
\end{proof}

The significance is conceptual: score-generated ranking fields do not occupy the generic pairwise world. They reside in a highly constrained subspace thereof.

\subsection{Score representability and cycle consistency}

\begin{theorem}[Score representability]
Let \(A\) be an antisymmetric field on the complete graph over \(n\) items. Then the following are equivalent:
\begin{enumerate}[leftmargin=2em]
    \item There exists \(s\in\R^n\) such that \(A_{ij}=s_i-s_j\) for all \(i,j\).
    \item \(A\) is cycle consistent, that is,
    \begin{equation}
        A_{i_0i_1} + A_{i_1i_2} + \cdots + A_{i_{m-1}i_0}=0
    \end{equation}
    for every cycle \((i_0,\dots,i_{m-1},i_0)\).
\end{enumerate}
\end{theorem}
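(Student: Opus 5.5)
The plan is to prove the two implications separately. The direction (1)\(\Rightarrow\)(2) is a telescoping computation. The direction (2)\(\Rightarrow\)(1) comes from an explicit construction of the potential \(s\) by anchoring at a base item.

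For (1)\(\Rightarrow\)(2), we substitute \(A_{i_k i_{k+1}}=s_{i_k}-s_{i_{k+1}}\) into the cycle sum, with indices taken modulo \(m\). Consecutive terms cancel pairwise, and the sum collapses to \(s_{i_0}-s_{i_0}=0\). This step needs nothing beyond the definition.

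For (2)\(\Rightarrow\)(1), we fix the base item \(1\) and define \(s_i:=A_{i1}\) for every \(i\). Antisymmetry gives \(A_{11}=-A_{11}\), so \(A_{11}=0\) and \(s_1=0\). We then check \(A_{ij}=s_i-s_j\) for distinct \(i,j\).

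\begin{itemize}
\item If \(i\neq 1\neq j\), apply cycle consistency to the triangle \((i,j,1,i)\). This gives \(A_{ij}+A_{j1}+A_{1i}=0\). Using \(A_{1i}=-A_{i1}\), it rearranges to \(A_{ij}=A_{i1}-A_{j1}=s_i-s_j\).
\item If \(j=1\), the identity \(A_{i1}=s_i-s_1\) holds by construction.
\item If \(i=1\), the identity follows from antisymmetry.
\end{itemize}

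So only length-three cycles are needed, together with antisymmetry, which plays the role of the length-two cycles. It is worth remarking that \(s\) is unique up to the gauge \(s\mapsto s+c\one\), consistent with the Dimension gap proposition.

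The only real obstacle is bookkeeping. The triangle argument needs three distinct vertices, so the degenerate cases where \(i\) or \(j\) coincides with the base item must be handled separately, as above. One should also confirm that "cycle" in (2) includes triangles, or at least does not exclude closed walks with repeated vertices. If the hypothesis were meant only for simple cycles of length at least three, triangles are still simple cycles when \(n\ge 3\), so the argument goes through. For \(n\le 2\), the statement reduces directly to antisymmetry.
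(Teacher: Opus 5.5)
Your proposal is correct and takes the same route as the paper's proof: telescoping for (1)\(\Rightarrow\)(2), and for the converse anchoring at a reference node \(r\), setting \(s_i=A_{ir}\), and applying cycle consistency on the triangle \((i,j,r,i)\) together with antisymmetry. The paper leaves implicit the cases where \(i\) or \(j\) coincides with the anchor, and your separate handling of them is the right extra care.
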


This is a standard exactness statement for pairwise comparison fields and is closely related to HodgeRank-style formulations \citep{jiang2011statistical,lim2020hodge}. A proof is included in Appendix~A for completeness.

\begin{remark}[A discrete differential viewpoint]
If \(s\) is viewed as a \(0\)-cochain on vertices, then \(A=ds\) is an exact \(1\)-cochain on edges. We use this language as a lens rather than as a full topological development.
\end{remark}

\subsection{Factor-level exactness in the linear case}

\begin{proposition}[Linear factor fields are exact]
In the linear model, the factor-level field
\begin{equation}
    A_{ij}^{(f)} = \Delta_{ij}^{(f)} = w_f(x_{if}-x_{jf})
\end{equation}
is exact for each factor \(f\). In particular, \(A^{(f)} = d\phi_f\) with \(\phi_f(i)=w_f x_{if}\).
\end{proposition}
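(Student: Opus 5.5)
The plan is to verify exactness directly from the definitions, with the score representability theorem as a cross-check. Fix a factor \(f\) and define the vertex function (\(0\)-cochain) \(\phi_f(i)=w_f x_{if}\) for each item \(i\). In the sense of the preceding remark, exactness means that the antisymmetric edge field \(A^{(f)}\) coincides with the discrete differential \(d\phi_f\), whose value on the edge \((i,j)\) is \(\phi_f(i)-\phi_f(j)\). The key step is therefore the one-line computation
\[
    (d\phi_f)_{ij}=\phi_f(i)-\phi_f(j)=w_f x_{if}-w_f x_{jf}=w_f(x_{if}-x_{jf})=\Delta_{ij}^{(f)}=A_{ij}^{(f)},
\]
which uses only the distributive law and Definition~3.1 of the linear factor-level contribution.

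Two supporting points would then be recorded. First, \(A^{(f)}\) is genuinely an antisymmetric field, since \(A_{ji}^{(f)}=w_f(x_{jf}-x_{if})=-A_{ij}^{(f)}\), so it lies in the edge space of Section~8.1. Second, by the score representability theorem, exactness is equivalent to cycle consistency. This can be checked directly: for any cycle \((i_0,\dots,i_{m-1},i_0)\), the sum \(\sum_k w_f(x_{i_kf}-x_{i_{k+1}f})\) telescopes to zero. The direct construction of \(\phi_f\) makes an appeal to that theorem unnecessary, but I would mention it as a consistency check.

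There is no substantive obstacle here. The only point needing care is conventions: the orientation \(A_{ij}=s_i-s_j\) must match \(d\phi(i,j)=\phi(i)-\phi(j)\) as used in the score representability theorem. It is also worth noting that \(\phi_f\) is determined only up to an additive constant. Finally, summing over \(f\) and applying Proposition~3.2 gives \(\sum_f A^{(f)}=ds\) with \(s=\sum_f\phi_f\), so the total margin field is also exact. This contrasts with the nonlinear case, where Theorem~\ref{thm:interaction-curvature} shows that individual factor fields need not be exact.
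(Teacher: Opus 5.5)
Your proof is correct and matches the paper's argument, which is the same one-line verification \(\phi_f(i)-\phi_f(j)=w_f x_{if}-w_f x_{jf}=w_f(x_{if}-x_{jf})=A_{ij}^{(f)}\). Your extra remarks on antisymmetry, telescoping cycle consistency, and exactness of the total field are valid but not needed for the proof.
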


\begin{proof}
By definition,
\[
    \phi_f(i)-\phi_f(j) = w_f x_{if} - w_f x_{jf} = w_f(x_{if}-x_{jf}) = A_{ij}^{(f)}.
\]
\end{proof}

This is a principal reason the linear case is especially transparent: not only the total field, but each factor field individually, is integrable.

\subsection{Factor-level exactness can break in nonlinear accounting}

In nonlinear models, the total pairwise field remains exact because it is still generated by scalar scores \(F(x_i)\). However, Theorem~\ref{thm:interaction-curvature} shows that factorwise path attribution is canonically path-independent only in the additive/no-interaction regime. After a path semantics is chosen and evaluated on a finite item set, the induced factor-level edge field need not preserve exactness for each individual factor.

\begin{definition}[Triangle curl]
For a factor-level field \(A^{(f)}\), define the triangle curl
\begin{equation}
    \kappa_f(i,j,k) = A_{ij}^{(f)} + A_{jk}^{(f)} + A_{ki}^{(f)}.
\end{equation}
\end{definition}

\begin{remark}[Exactness test]
If \(A^{(f)}\) is exact, then \(\kappa_f(i,j,k)=0\) for every triple \((i,j,k)\). Conversely, nonzero triangle curl witnesses a local failure of factor-level exactness.
\end{remark}

Section~7 identified mixed-partial interaction curvature as the continuous source of factorwise path dependence. The triangle curl introduced here is a discrete edge-field diagnostic of the same obstruction after one has chosen an attribution semantics and sampled pairwise comparisons on a finite item set.

\begin{remark}[Cancellation in the total field]
If the total field \(A=\sum_f A^{(f)}\) is exact, then for every triple,
\begin{equation}
    \sum_f \kappa_f(i,j,k)=0.
\end{equation}
Consequently, nonlinear factor-level cyclic tensions may exist, but they must cancel in the total field.
\end{remark}

\begin{figure}[t]
    \centering
    \includegraphics[width=0.62\linewidth]{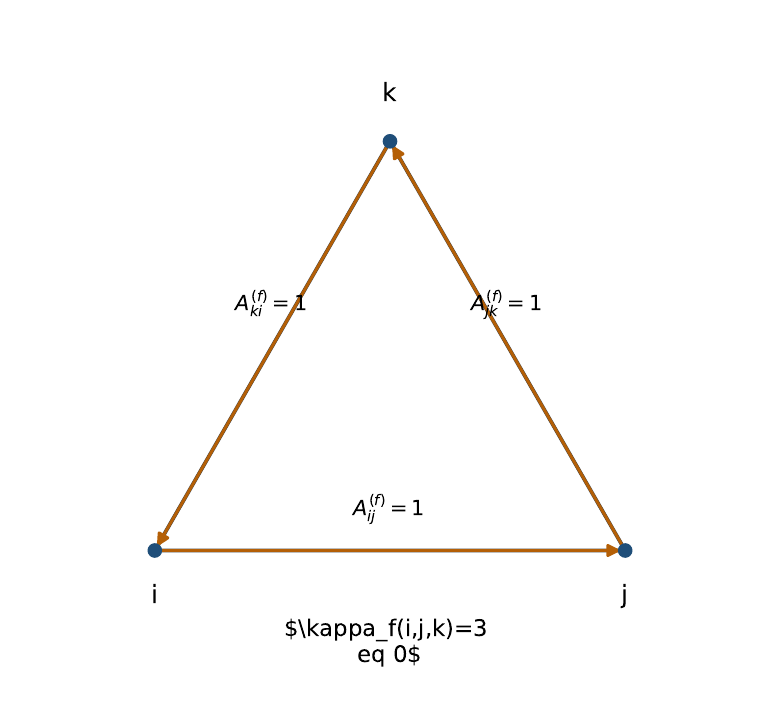}
    \caption{A factor-level triangle field with nonzero cyclic tension. The total ranking field may still be exact after cancellation across factors.}
    \label{fig:triangle-curl}
\end{figure}

\subsection{Hodge-like decomposition as a diagnostic lens}

On a finite sampled pairwise graph, one may further project a factor-level field onto an exact component and a non-exact residual using graph-based Hodge machinery \citep{jiang2011statistical,lim2020hodge}. In schematic form,
\begin{equation}
    A^{(f)} = d\phi_f + r_f,
\end{equation}
where \(d\phi_f\) is the exact component and \(r_f\) denotes the remaining non-exact structure. Depending on the sampled graph, clique complex, inner product, and projection convention, this residual may include curl and harmonic components. We employ this solely as a diagnostic lens. The present paper does not develop the full operator-theoretic setting, inner products, simplicial-complex choice, or estimation procedures required for a complete Hodge theory of Influence Exchange.

Nevertheless, the interpretation is valuable. The exact component measures factor-level influence that can be reconciled with a globally score-compatible accounting, while the non-exact residual measures finite-graph factor-level inconsistency that cannot be integrated into a factor-level potential without leftover structure. This viewpoint provides a natural transition to the geometric closure of the following section.

\section{Root Space, Coxeter Arrangements, and Weyl Chambers}

The preceding sections connected ranking to pairwise margins, permutation changes, hyperplane crossings, and exact edge fields. This section completes the geometric account by quotienting out the irrelevant global-shift direction and situating rankings within the standard type-\(A\) root-space framework.

\subsection{Gauge fixing}

Pairwise margins are invariant under the transformation \(s \mapsto s + c\one\). It is therefore natural to fix the gauge by requiring
\begin{equation}
    \sum_{i=1}^n s_i = 0.
\end{equation}
The effective score space becomes the \((n-1)\)-dimensional hyperplane
\begin{equation}
    \mathfrak{h} = \{s\in\R^n : \langle s,\one\rangle = 0\},
\end{equation}
which is the standard root space for type \(A_{n-1}\) \citep{humphreys1990reflection,bjornerbrenti2005}.

\subsection{Roots and chambers}

For each pair \(i \neq j\), define the root vector
\begin{equation}
    \alpha_{ij} = \e_i-\e_j.
\end{equation}
Then
\begin{equation}
    \Delta_{ij} = s_i-s_j = \langle s,\alpha_{ij}\rangle.
\end{equation}
The boundary hyperplanes are precisely
\begin{equation}
    H_{ij} = \{s\in\mathfrak{h} : \langle s,\alpha_{ij}\rangle = 0\}.
\end{equation}
This is the type-\(A\) Coxeter arrangement. Its chambers are Weyl chambers, and each chamber corresponds to one strict ordering of the coordinates. Equivalently, each chamber corresponds to one permutation of the items.

\begin{figure}[t]
    \centering
    \includegraphics[width=0.72\linewidth]{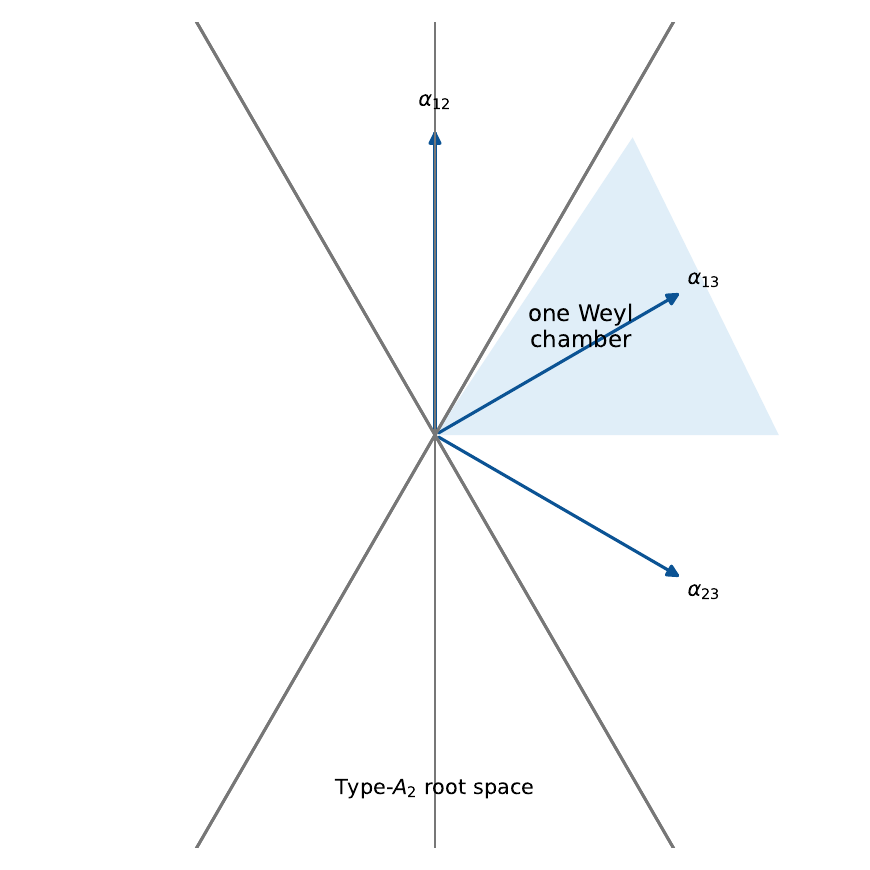}
    \caption{A conceptual type-\(A_2\) root-space picture. Root directions determine boundary mirrors, and each Weyl chamber corresponds to one total order.}
    \label{fig:root-space}
\end{figure}

\subsection{Interpretation}

This viewpoint does not alter the formal objects of the paper; it reorganizes them. A ranking corresponds to the chamber containing the gauge-fixed score vector. A local flip corresponds to crossing one mirror of the arrangement. A pairwise margin is the coordinate of the score vector with respect to a root vector. Influence Exchange then tracks how global factor-level control evolves as the score vector moves through a geometry whose chamber structure is indexed by permutations.

The permutahedron provides a related combinatorial picture: the braid arrangement above is the normal fan of the permutahedron \citep{postnikov2009permutohedra}. We mention this connection solely for geometric completeness, not to introduce a new central object. The manuscript remains centered on pairwise margins and Influence Exchange.

\section{Related Work}

This section positions the manuscript relative to existing literature. The goal is not to survey ranking, explainability, or geometric combinatorics comprehensively, but to situate the specific pairwise influence-accounting perspective developed in the preceding sections.

\paragraph{Learning to rank and pairwise ranking.}
Pairwise ranking is classical in information retrieval and preference learning. Large-margin and clickthrough-based learning-to-rank methods use pairwise preferences to learn ranking functions \citep{joachims2002optimizing}, while RankNet and related models center pairwise comparisons in neural learning objectives \citep{burges2005ranknet,burges2010ranknet,wu2010lambdamart}. Bradley--Terry and Thurstone models provide earlier statistical formulations of pairwise comparison \citep{bradley1952rank,thurstone1927law}. The present paper does not introduce a new loss function or estimator in this line. Its contribution is to formalize a pairwise \emph{influence-accounting} view centered on margins, local shares, global structure, and exchange.

\paragraph{Feature attribution and path methods.}
The nonlinear section is informed by attribution methods such as Integrated Gradients, Aumann--Shapley-style value decompositions, and SHAP \citep{sundararajan2017axiomatic,aumann1974values,lundberg2017unified,shapley1953value}. These works address scalar model outputs or additive value allocations more broadly. Our contribution is to place attribution directly on the pairwise margin function and to use the resulting factor-level pairwise contributions as the basis for influence shares and Influence Exchange.
Work on feature-interaction attribution also uses mixed-derivative information to explain interactions among inputs \citep{janizek2021explaining}. Our use of mixed partials is different: we isolate a path-independence boundary for the factorwise margin ledger. On the product-domain smooth setting of Theorem~\ref{thm:interaction-curvature}, factorwise path uniqueness is equivalent to vanishing mixed partials, and full factorwise uniqueness is equivalent to additive separability.

\paragraph{Permutation distance and pairwise consistency.}
Kendall tau and adjacent-swap characterizations are standard tools for measuring disagreement between rankings \citep{kendall1938new,bjornerbrenti2005}. They count pairwise reversals but do not attribute the forces underlying those reversals. On the representability side, HodgeRank and related work connect pairwise comparison data to exactness, cyclic inconsistency, and graph-based decompositions \citep{jiang2011statistical,lim2020hodge}. These mathematical structures provide a natural language for the exactness and Hodge-like components of our framework.
An additional contribution of this manuscript is the bridge from continuous interaction obstruction to discrete non-exact residual: mixed-partial curvature explains path dependence before sampling; triangle curl diagnoses a chosen factor-level edge-field after sampling.

\paragraph{Production influence optimization.}
A companion Sortify technical report uses Influence Share/Exchange as the accounting layer of an autonomous LLM-driven ranking-optimization agent and describes online deployment in a large-scale production recommendation system \citep{cheng2026letagentsteer}. That report is the appropriate source for the system architecture, online A/B protocol, market-specific deployment narrative, and business-metric outcomes. The present paper is complementary: it abstracts from the deployed system and studies the pairwise-margin, local-share, exchange-geometry, nonlinear-attribution, exactness, and chamber-geometry foundations of the same influence-accounting language.

\paragraph{Hyperplane geometry, root systems, and Weyl chambers.}
The geometry of score-induced rankings belongs to the classical type-\(A\) arrangement, reflection-group, and root-system literature \citep{humphreys1990reflection,stanley2007hyperplane,postnikov2009permutohedra}. We do not claim novelty for these structures per se. Our observation is that they become especially natural once ranking is centered on pairwise margins and Influence Exchange, since the same pairwise boundary hyperplanes that define chambers also define the local margin coordinates on which the accounting framework is constructed.

\section{Scope, Limitations, and Claim Boundaries}

This paper intentionally separates proved statements, standard recalled results, and higher-level structural interpretations. Table~\ref{tab:claim-ladder} summarizes this stratification.

\begin{table}[t]
    \centering
    \caption{Claim ladder used in the manuscript.}
    \label{tab:claim-ladder}
    \begin{tabularx}{0.96\linewidth}{>{\raggedright\arraybackslash}p{0.16\linewidth} >{\raggedright\arraybackslash}p{0.22\linewidth} >{\raggedright\arraybackslash}X}
\toprule
Layer & Status in the paper & Typical examples \\
\midrule
Layer A & Proved directly in the main text & Linear margin decomposition, refinement-consistency characterization of local share, exchange geometry / Laplacian response / zero-exchange rigidity, margin as normal coordinate, linear recovery of Pairwise IG, interaction-curvature characterization of factorwise path dependence \\
Layer B & Recalled as standard, sketched, or proved in the appendix & Kendall distance facts, midpoint expansion, score representability, gauge fixing, chamber correspondence \\
Layer C & Interpreted structurally & Pairwise control allocation, local pricing, factor-level cyclic tension, Weyl chambers as geometric home \\
Layer D & Explicitly limited & No causal interpretation, no unique nonlinear attribution outside the additive/no-interaction regime, no theorem-heavy novelty claim for Hodge-like or Weyl views \\
\bottomrule
\end{tabularx}

\end{table}

\paragraph{Empirical evidence and scope.}
This manuscript is not itself an experimental report: it contains no new benchmark suite, ablation study, or independent deployment-study section. This limitation should not be read as saying that Influence Share/Exchange lacks production evidence. The companion Sortify technical report \citep{cheng2026letagentsteer} describes a fully autonomous LLM-driven ranking-optimization agent deployed in a large-scale production recommendation system, with online A/B validation, positive business-metric outcomes, and subsequent production rollout. The present paper supplies a formal pairwise-margin and influence-accounting foundation rather than a duplicate empirical evaluation of that production system.

\paragraph{Ties, degenerate pairs, and pair distributions.}
The main body assumes no ties for clarity. Influence shares are defined only on informative pairs with \(Z_{ij}>0\). Both restrictions are pragmatic simplifications rather than fundamental limitations, and Appendix~C addresses the corresponding edge cases. Similarly, the pair distribution \(\Pcal\) is specified by the analyst; the framework does not prescribe a universal weighting scheme.

\paragraph{Canonical local shares do not remove all modeling choices.}
For linear local effort vectors, Theorem~\ref{thm:refinement-local-share} proves the uniqueness of the refinement-consistent budgeting rule used for \(\rho_{ij}^{(f)}\). This does not determine the analyst's item-pair distribution \(\Pcal\), the semantic definition of a factor, or the population on which global influence is averaged. The log-weight rigidity result also excludes zero-weight or never-varying factors into the inactive complement of \(\mathcal F_*\); these coordinates can be reported as trivial zero-share coordinates, but they are not identified by the log-absolute reparameterization.

\paragraph{Nonlinear attribution is semantic, not unique truth.}
In nonlinear models, factor-level accounting generally depends on a chosen attribution semantics and on a chosen factor granularity. Theorem~\ref{thm:interaction-curvature} identifies the precise exceptional regime where factorwise path-independence is recovered: the relevant mixed partials vanish, and full factorwise uniqueness corresponds to additive separability. Outside that regime, Pairwise Integrated Gradients is one complete path-based choice that recovers the linear case, but it is not the only admissible semantics. Nothing in this paper should be interpreted as a claim that nonlinear path attribution reveals a causal or universally correct decomposition.

\paragraph{Exactness and Hodge-like views are scoped.}
The exactness discussion in Section~8 is rigorous for total score-generated fields and for linear factor fields. The Hodge-like decomposition is discussed only as a structural diagnostic lens applied to finite sampled pairwise graphs. We do not develop a complete statistical estimation theory, orthogonality framework, or robustness theory for these decompositions here.

\paragraph{Weyl chambers are an interpretive closure.}
The root-space and Weyl-chamber section is included because it provides a natural geometric setting for score-induced total orders after gauge fixing. It is not advanced as an independent novelty claim. The manuscript employs this geometry to complete the conceptual chain from pairwise margins to chamber-valued ranking outcomes.

\section{Conclusion}

This paper has developed a pairwise-first theory of ranking. Absolute scores serve as implementation coordinates; pairwise margins constitute the atomic analytical objects. In the linear case, exact factor-level local contributions lead to a canonical local influence share: refinement-consistent local budgeting forces the \(L_1\) share. Upon aggregation, the global influence structure is not merely a simplex-valued summary; in log-absolute-weight coordinates it is a convex-potential gradient with a competition-graph Laplacian response. Influence Exchange is therefore a structured reallocation of pairwise control, with zero-exchange rigidity modulo the common-scaling gauges of the competition graph.

The same perspective extends to nonlinear scoring once the invariant object is distinguished from the semantic choice. The invariant object is the pairwise margin. The semantic choice concerns how a nonlinear margin is decomposed into factor-level path contributions. The interaction-curvature theorem sharpens this boundary: path semantics are needed exactly to manage mixed-partial interaction curvature, and factorwise pathwise uniqueness is recovered precisely in the additive/no-interaction regime.

At a higher structural level, permutation distance, hyperplane crossings, exact antisymmetric edge fields, factor-level curl, and root-space and Weyl-chamber geometry are organized as successive closures of the same analytical progression. The continuous interaction obstruction of Section~7 becomes the discrete triangle-curl and cyclic-residual language of Section~8; the pairwise boundary picture then closes in the type-\(A\) chamber geometry of Section~9. Future work may test empirical diagnostics of exchange rigidity, investigate robustness under alternative nonlinear semantics and factor granularity, and develop graph-based estimators for the exact/cyclic decomposition after the continuous-to-discrete bridge established here.

\appendix

\section{Additional Proofs}

\subsection{Midpoint local linearization}

We prove Proposition~7.1. Let
\[
    \phi(t) = F(\barx + t\deltax).
\]
Then
\[
    \Delta_{ij} = \phi\!\left(\frac{1}{2}\right)-\phi\!\left(-\frac{1}{2}\right).
\]
Expanding \(\phi\) around \(0\) gives
\[
    \phi(t)
    =
    \phi(0)
    +
    t\phi'(0)
    +
    \frac{t^2}{2}\phi''(0)
    +
    \frac{t^3}{6}\phi'''(\xi_t)
\]
for some \(\xi_t\) between \(0\) and \(t\). Subtracting the expansions at \(t=1/2\) and \(t=-1/2\) cancels the even terms, so
\[
    \Delta_{ij}
    =
    \phi'(0) + O(\|\deltax\|^3).
\]
By the chain rule, \(\phi'(0)=\nabla F(\barx)^\top \deltax\), proving the claim.

\subsection{Proof of score representability}

We prove Theorem~8.2.

\paragraph{(1) implies (2).}
If \(A_{ij}=s_i-s_j\), then for any cycle \((i_0,\dots,i_{m-1},i_0)\),
\[
    A_{i_0i_1}+\cdots + A_{i_{m-1}i_0}
    =
    (s_{i_0}-s_{i_1}) + \cdots + (s_{i_{m-1}}-s_{i_0}) = 0
\]
by telescoping cancellation.

\paragraph{(2) implies (1).}
Fix a reference node \(r\). Define
\[
    s_i = A_{ir}.
\]
For any pair \(i,j\), cycle consistency on the triangle \((i,j,r,i)\) yields
\[
    A_{ij} + A_{jr} + A_{ri}=0.
\]
Using antisymmetry, \(A_{ri}=-A_{ir}\), so
\[
    A_{ij} = A_{ir} - A_{jr} = s_i - s_j.
\]
Hence \(A\) is score representable.

\subsection{Triangle curl cancellation}

Suppose \(A=\sum_f A^{(f)}\) and the total field \(A\) is exact. Then for every triple \((i,j,k)\),
\[
    0
    =
    A_{ij}+A_{jk}+A_{ki}
    =
    \sum_f \left(A_{ij}^{(f)}+A_{jk}^{(f)}+A_{ki}^{(f)}\right)
    =
    \sum_f \kappa_f(i,j,k).
\]
This establishes the cancellation statement used in Section~8.

\section{Toy Examples}

\subsection{A linear two-factor example}

Consider two items \(i\) and \(j\) and two factors. Let
\[
    \Delta_{ij}^{(1)} = 3,
    \qquad
    \Delta_{ij}^{(2)} = -2.
\]
Then
\[
    \Delta_{ij} = 1,
    \qquad
    Z_{ij} = |3| + |-2| = 5,
    \qquad
    \rho_{ij}^{(1)} = \frac{3}{5},
    \qquad
    \rho_{ij}^{(2)} = \frac{2}{5}.
\]
This example illustrates the distinction between direction and local participation. Factor \(1\) supports item \(i\), factor \(2\) supports item \(j\), and both contribute nontrivially to the local pairwise contest.

\subsection{Path semantics in the bilinear model}

Take \(F(x_1,x_2)=x_1x_2\) and two points \(x_j=(a,b)\), \(x_i=(c,d)\).

\paragraph{Axis-aligned path.}
Move first in the \(x_1\) direction and then in the \(x_2\) direction. The resulting factor contributions are
\[
    \Delta_{ij}^{\gamma_1,(1)} = b(c-a),
    \qquad
    \Delta_{ij}^{\gamma_1,(2)} = c(d-b).
\]

\paragraph{Reverse axis-aligned path.}
Move first in the \(x_2\) direction and then in the \(x_1\) direction. Then
\[
    \Delta_{ij}^{\gamma_2,(1)} = d(c-a),
    \qquad
    \Delta_{ij}^{\gamma_2,(2)} = a(d-b).
\]

\paragraph{Straight-line path.}
Pairwise Integrated Gradients gives
\[
    \Delta_{ij}^{\mathrm{PIG},(1)} = \frac{b+d}{2}(c-a),
    \qquad
    \Delta_{ij}^{\mathrm{PIG},(2)} = \frac{a+c}{2}(d-b).
\]

All three allocations sum to the same total margin \(cd-ab\), but the factor ledger changes with the path semantics.

\subsection{Direct attribution of the pairwise margin}

Let
\[
    F(x_1,x_2,x_3) = x_1x_2 + x_3.
\]
Then the pairwise target is
\[
    G(u,v) = F(u)-F(v) = (u_1u_2-v_1v_2) + (u_3-v_3).
\]
Directly attributing \(G\) keeps the explanatory object aligned with the ranking decision. By contrast, separately attributing \(F(u)\) and \(F(v)\) may require independent baseline conventions whose difference is not itself a primitive ranking quantity.

\section{Notation and Edge Cases}

\subsection{Notation summary}

\begin{table}[h]
    \centering
    \caption{Core notation used throughout the manuscript.}
    \label{tab:notation}
    \begin{tabularx}{0.96\linewidth}{>{\raggedright\arraybackslash}p{0.26\linewidth} >{\raggedright\arraybackslash}X}
\toprule
Object & Definition \\
\midrule
Item indices & \(i,j,k\) \\
Pair index in exchange geometry & \(p=(i,j)\) \\
Feature vector & \(x_i \in \R^d\) \\
Score & \(s_i\) \\
Pairwise margin & \(\Delta_{ij}=s_i-s_j\) \\
Linear factor contribution & \(\Delta_{ij}^{(f)} = w_f(x_{if}-x_{jf})\) \\
Informative-pair effort & \(Z_{ij} = \sum_f |\Delta_{ij}^{(f)}|\) \\
Local influence share & \(\rho_{ij}^{(f)} = |\Delta_{ij}^{(f)}|/Z_{ij}\) \\
Pair distribution & \((i,j)\sim\Pcal\) \\
Global influence structure & \(I_f=\E_{(i,j)\sim\Pcal}[\rho_{ij}^{(f)}]\) \\
Influence Exchange & \(\Exch(\theta,\theta')=\Ivec(\theta')-\Ivec(\theta)\) \\
Absolute pairwise feature gap & \(b_p^{(f)}=|x_{if}-x_{jf}|\), used in the log-weight exchange theorem \\
Active factor set & \(\mathcal F_*\), the nonzero, pair-varying factors in Theorem~\ref{thm:exchange-geometry} \\
Log-absolute linear weight & \(u_f=\log |w_f|\), local to Theorem~\ref{thm:exchange-geometry} \\
Exchange potential and response & \(\Phi(u)\) and \(J(u)=\nabla^2\Phi(u)\) in the exchange-geometry theorem \\
Nonlinear pairwise margin & \(\Delta_{ij}=F(x_i)-F(x_j)\) \\
Path-based contribution & \(\Delta_{ij}^{\gamma,(f)}\) \\
Factorwise path form & \(\omega_f=\partial_fF(x)\,dx_f\), used in Theorem~\ref{thm:interaction-curvature} \\
Pairwise margin function & \(G(u,v)=F(u)-F(v)\) \\
Antisymmetric edge field & \(A_{ij}=-A_{ji}\) \\
Triangle curl & \(\kappa_f(i,j,k)=A_{ij}^{(f)}+A_{jk}^{(f)}+A_{ki}^{(f)}\) \\
Root vector & \(\alpha_{ij}=\e_i-\e_j\) \\
\bottomrule
\end{tabularx}

\end{table}

\subsection{Ties and degenerate local contests}

The main body assumes \(s_i\neq s_j\) for all distinct \(i,j\). In applications, ties may occur because of exact equality, quantization, or deterministic tie-breaking rules. The pairwise-first framework can still be used, but one must specify whether ties are excluded, broken by an external rule, or treated as a separate outcome class.

Likewise, local influence shares are defined only when
\[
    Z_{ij} = \sum_f |\Delta_{ij}^{(f)}| > 0.
\]
If \(Z_{ij}=0\), then every factor-level local contribution is zero and the pair carries no information for share allocation. We therefore exclude such pairs from the local share definition.

\subsection{Factor granularity}

The framework assumes a chosen set of factors. In a linear model, a factor can be one feature, one feature group, or one business term in a larger decomposition. Theorem~\ref{thm:refinement-local-share} shows that if such a factor is only refined into a bookkeeping split of nonnegative local effort, the local share must split proportionally and the block share is preserved. In nonlinear models the granularity choice is even more consequential because interaction terms may be grouped or split in different ways. Influence Exchange is therefore conditional on an analyst-defined factorization.

\subsection{Zero coordinates in log-weight exchange geometry}

The log-weight coordinates \(u_f=\log |w_f|\) used in Theorem~\ref{thm:exchange-geometry} are defined only for factors with nonzero linear weights. The theorem therefore restricts to \(\mathcal F_*\): factors with \(|w_f|>0\) that vary on a set of pairs with positive \(\Pcal\)-probability. Zero-weight factors or factors that never vary under \(\Pcal\) remain valid coordinates in the original accounting pipeline, but they enter the log-parameterized rigidity theorem only as trivial zero-share coordinates outside \(\mathcal F_*\).

\subsection{Path choice in nonlinear attribution}

Path-based pairwise attribution requires a chosen path or an equivalent attribution semantics outside the additive/no-interaction regime characterized by Theorem~\ref{thm:interaction-curvature}. Straight-line paths are attractive because they are simple, complete, and recover the linear case. They are not mandatory. Different paths encode different views of how interaction effects should be distributed across factors.

\subsection{Gauge fixing}

Because pairwise margins are invariant under global score shifts, one may work either in \(\R^n\) modulo the span of \(\one\) or in the gauge-fixed hyperplane \(\sum_i s_i=0\). The latter viewpoint is used in Section~9 because it aligns naturally with root-space geometry.

\section{Additional Geometry and Graph Background}

\subsection{The braid arrangement}

The hyperplanes \(H_{ij}=\{s_i=s_j\}\) form the braid arrangement of type \(A_{n-1}\). Restricting to the gauge-fixed space \(\sum_i s_i=0\) removes the global-shift redundancy and leaves an arrangement whose chambers correspond bijectively to total orders of the \(n\) coordinates.

\subsection{Weyl chambers}

In the type-\(A\) root system, the chambers cut out by the roots \(\alpha_{ij}=\e_i-\e_j\) are Weyl chambers. Each chamber is specified by a strict chain of inequalities among the coordinates, such as
\[
    s_{\pi(1)} > s_{\pi(2)} > \cdots > s_{\pi(n)}.
\]
Hence the chamber labels are permutations.

\subsection{Finite-graph Hodge viewpoint}

For a finite sampled pairwise graph \(G=(V,E)\), an antisymmetric edge field assigns a signed value to each oriented edge. Classical graph Hodge theory decomposes such a field into gradient-like, curl-like, and harmonic components under a chosen inner product. In ranking applications, the exact component captures score-compatible structure while the cyclic component captures local inconsistency \citep{jiang2011statistical,lim2020hodge}. The manuscript borrows this language only at the level needed to interpret factor-level cyclic tension.

\bibliographystyle{plainnat}
\bibliography{refs}

\end{document}